\documentclass[onecolumn]{article}
\usepackage[left=1in, right=1in, top=1in, bottom=1in]{geometry}

\usepackage{amsthm}
\usepackage{amssymb}
\usepackage{amsmath}
\usepackage{mathtools}
\usepackage{csquotes}
\usepackage{enumitem}
\usepackage{svg}
\usepackage{hyperref}
\usepackage{xspace}
\usepackage{xcolor}
\usepackage{fancyhdr}
\usepackage{complexity}

\setlist[itemize]{leftmargin=*}
\setlist{nosep}
\setlist{nolistsep}

\pagestyle{fancy}
\fancyhead{}
\fancyhead[L]{}
\fancyfoot{}
\fancyfoot[R]{\thepage}

\hypersetup{
    colorlinks,
    linkcolor={red!50!black},
    citecolor={blue!100!black},
    urlcolor={black!80!black}
}

\newcommand{\defn}[1]{{\textit{\textbf{\boldmath #1}}}\xspace}
\renewcommand{\paragraph}[1]{\vspace{0.04in}\noindent{\bf \boldmath #1.}}

\usepackage{bbm}

\newcommand{\interior}[1]{ {\kern0pt#1}^{\mathrm{o}} }


\newcommand{\set}[1]{\left\{ #1\right\}}

\newcommand{\Z}{\mathbb{Z}}

\usepackage{algorithm}
\usepackage[noend]{algpseudocode} 
\makeatletter
\newcounter{HALG@line}
\renewcommand{\theHALG@line}{\thealgorithm.\arabic{ALG@line}}
\makeatother

\usepackage[capitalise,nameinlink,noabbrev]{cleveref}
\crefname{equation}{}{} 
\crefname{enumi}{Step}{} 

\theoremstyle{definition}
\newtheorem{fact}{Fact}
\newtheorem{defin}{Definition}
\newtheorem{rmk}{Remark}

\newtheorem{prop}{Proposition}

\newtheorem{lem}{Lemma}
\newenvironment{lemma}{\begin{lem}}{\end{lem}}
\newtheorem{clm}{Claim}

\newtheorem{cor}{Corollary}

\newtheorem{thm}{Theorem}
\newenvironment{theorem}{\begin{thm}}{\end{thm}}

\makeatletter
\newtheorem*{rep@theorem}{\rep@title}
\newcommand{\newreptheorem}[2]{%
    \newenvironment{rep#1}[1]{%
        \def\rep@title{\cref{##1}}%
        \begin{rep@theorem}}%
            {\end{rep@theorem}}}
\makeatother
\newreptheorem{theorem}{Theorem}

\usepackage{mdframed}
\usepackage{framed}

\usepackage[
    backend=biber,
    style=alphabetic,
]{biblatex}
\addbibresource{citations.bib} 

\newlang{\HAM}{HAM}
\newlang{\hHAM}{\text{-}HAM}

\title{Complexity of Multiple-Hamiltonicity in Graphs of Bounded Degree}
\author{Brian Liu, Nathan Sheffield, Alek Westover}
\date{}
\begin{document}
\maketitle

\begin{abstract}
We study the following generalization of the Hamiltonian cycle problem:
Given integers $a,b$ and graph $G$, does there exist a closed walk in $G$ that visits every vertex at least $a$ times and at most $b$ times? Equivalently, does there exist a connected $[2a,2b]$ factor of $2b \cdot G$ with all degrees even? This problem is $\NP$-hard for any constants $1 \leq a \leq b$. However, the graphs produced by known reductions have maximum degree growing linearly in $b$. The case $a = b = 1 $ --- i.e. Hamiltonicity --- remains \NP-hard even in $3$-regular graphs; a natural question is whether this is true for other $a$, $b$. 

In this work, we study which $a, b$ permit polynomial time algorithms and which lead to $\NP$-hardness in graphs with constrained degrees. We give tight characterizations for regular graphs and graphs of bounded max-degree, both directed and undirected.
\end{abstract}

\section{Introduction}
\subsection{Background and Previous Work}
The Hamiltonian cycle problem asks, given a graph $G$, whether $G$ admits a closed walk visiting every vertex exactly once. 
A classic result of Garey, Johnson and Tarjan is that Hamiltonicity is \NP-hard even in graphs of max degree 3, and in fact even in 3-regular graphs \cite{GJ3}.

In 1990, Broersma and G{\"o}bel proposed two variants of the Hamiltonian cycle problem: for fixed $k$, is there a closed walk that visits each vertex exactly $k$ times, or at least once and at most $k$ times, respectively? They showed that for constant $k$, both of these problems are \NP-hard \cite{broersmakwalks}. Jackson and Wormald subsequently proved that both problems remain \NP-hard in $j$-connected graphs for any constant $j$ \cite{wormaldkwalks}. These proofs extend to give hardness of the following generalization:

\begin{defin} 
    For fixed integers $1 \leq a \leq b$, a graph $G$ is called $[a,b]\hHAM$ if there exists a closed walk on $G$ that visits each vertex at least $a$ times and at most $b$ times.
\end{defin}

Being $[a,b]\hHAM$ can be considered as a graph factor problem. Letting $2b \cdot G$ be the multigraph obtained by making $2b$ copies of each edge in $G$, $G$ is $[a,b]\hHAM$ if and only if $2b \cdot G$ contains a \defn{spanning even $[2a, 2b]$ factor}; that is, a connected subgraph containing all vertices of $G$, with every degree even and in $[2a, 2b]$. The equivalence of this formulation follows from taking an Euler tour. 

There is a substantial body of combinatorics work seeking sufficient conditions for the existence of graph factors and walks; see Plummer's survey \cite{plummersurvey}, the book by Akiyama and Kano \cite{akiyamakanobook}, or the survey by Kouider and Vestergaard which focuses specifically on connected factors \cite{kouidersurvey}. For example, Gao and Richter showed that any $3$-connected planar graph is $[1,2]\hHAM$ \cite{2walk3connectedplanar}, and Schmid and Schmidt gave a polynomial-time algorithm to find such a walk in which only the 3-separators are visited twice \cite{2walk3connectedplanarpolytime}. Several variants of the problem have also been consider from an algorithmic complexity standpoint. Ganian et al consider the case where $a$ is nonconstant in $n$, showing that $[a,a]\hHAM$ is $\NP$-hard for $a \leq O(n^{1-\varepsilon})$, but in $\RP$ for $a \geq \Omega(n/\log n)$ \cite{bigaa}. Nishiyama et al show that there is a polynomial-time algorithm to determine whether a graph has a walk visiting every vertex an odd number of times, but that the problem becomes $\NP$-hard when each edge can be traversed at most $3$ times \cite{pairityham}. 

In this work, we will study the complexity of $[a,b]\hHAM$ in regular graphs and graphs of constant maximum degree. Prior papers have investigated degree conditions sufficient to guarantee $[a,b]\hHAM$ --- for instance Kouider and Vestergaard showed that, for $a \geq 2$, any 2-connected graph with minimum degree $an/(a+b)$ is $[a,b]\hHAM$ \cite{evenabmindeb}, and independent papers by Gao and Richter, Jin and Li respectively showed that any $2$-connected graph with maximum degree $2b - 2$ is $[1,b]\hHAM$ \cite{2connectedmaxdeg, 2connectedmaxdeg2}. However, to our knowledge we are the first to consider the algorithmic complexity of $[a,b]\hHAM$ in general bounded degree graphs.

\subsection{Results}

Our main results are tight characterizations for the hardness of $[a,b]\hHAM$ in regular graphs, graphs of bounded max-degree, and bounded max-degree digraphs. Interestingly, the thresholds for which $a,b$ lead to $\NP$-hardness are very different in each of these settings. In regular graphs, we find that hardness is determined by the maximum visit count $b$ in comparison to the degree $d$, where the parity of $d$ is very important:

\begin{reptheorem}{regular}
    When restricted to $d$-regular graphs, $[a,b]\hHAM$ is \NP-hard if either 
    $d$ is even and $b < d/2$, or 
    $d$ is odd and $b < d$.    
    Otherwise, $[a,b]\hHAM \in \P$.
\end{reptheorem}

In graphs of bounded max degree, we find instead that hardness is determined by the ratio of $b$ to $a$ in comparison to the max degree $d$, and that $a = 1$ is a special case:

\begin{reptheorem}{maxdeg}
    When restricted to graphs of maximum degree $d$, $[a,b]\hHAM$ is \NP-hard if either 
    $a = 1$ and $b < d$, or
    $a>1$ and $\frac{b}{a} < d-1$.
    Otherwise, $[a,b]\hHAM \in \P$.
\end{reptheorem}

For directed graphs, $[a,b]\hHAM$ turns out to always be hard in graphs of maximum degree $4$, and hard in graphs of maximum degree $3$ unless $a = b$ or $b = 1$:

\begin{reptheorem}{3digraph}
    In directed graphs of maximum degree $3$, for $1 \leq a \leq b$, $[a,b]\hHAM \in \P$ if $a = b > 1$, and $\NP$-hard otherwise. 
\end{reptheorem}

\begin{reptheorem}{4digraph}
    For all $b \geq a \geq 1$, $[a,b]\hHAM$ is $\NP$-hard in directed graphs of maximum degree 4. 
\end{reptheorem}

The hardness proofs for maximum degree digraphs easily extend to show hardness in regular digraphs.

\subsection{Preliminaries}
\paragraph{Introducing Depletors}
Now we introduce the notion of a \defn{depletor} gadget which will be used in all of our hardness reductions.
We introduce this notion by showing that $[a,b]\hHAM$ is hard in unrestricted graphs.
\begin{lemma}[Broersma, G{\"o}bel \cite{broersmakwalks}] \label{lem:basic}
    In unrestricted graphs, for any constants $1 \leq a \leq b$, $[a,b]\hHAM$ is \NP-hard. 
\end{lemma}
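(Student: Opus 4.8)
The plan is to reduce from the Hamiltonian cycle problem, which is \NP-hard. Given an instance $G$ with $|V(G)|\ge 3$, I would build a graph $G'$ by attaching to each vertex $v\in V(G)$ a fixed, constant-size gadget $D_v$ --- the \defn{depletor} --- joined to the rest of $G'$ only at $v$ (its \emph{port}). I work with the factor reformulation from the introduction: $G'$ is $[a,b]\hHAM$ iff $2b\cdot G'$ has a spanning even $[2a,2b]$ factor. Given such a factor $H$, write $\delta_v$ for the total $H$-multiplicity of the edges of $D_v$ at the port $v$, and $H_0$ for the restriction of $H$ to the edges of $G$. The depletor is designed to satisfy two properties. \emph{Soundness}: every spanning even $[2a,2b]$ factor $H$ of $2b\cdot G'$ has $\delta_v\ge 2b-2$ for all $v$. \emph{Completeness}: $D_v$ admits an assignment of multiplicities (each at most $2b$) to its edges for which $\delta_v = 2b-2$, every internal vertex of $D_v$ gets even degree in $[2a,2b]$, and the chosen edges connect all of $V(D_v)$ to $v$.

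Granting these gadget properties, the reduction is short. Let $H$ be a spanning even $[2a,2b]$ factor of $2b\cdot G'$. Summing $H$-degrees over the internal vertices of $D_v$ yields $2m+\delta_v$, where $m$ is the number of $H$-edges inside $D_v$; this is even, so $\delta_v$ is even. Then Soundness gives $\deg_{H_0}(v)=\deg_H(v)-\delta_v\le 2b-(2b-2)=2$, and $\deg_{H_0}(v)$ is even; moreover $\deg_{H_0}(v)\ne 0$, since otherwise no $H$-edge would leave the proper nonempty set $V(D_v)\cup\{v\}$, contradicting connectedness of $H$. Hence $\deg_{H_0}(v)=2$ for every $v$, so $H_0$ is a disjoint union of cycles spanning $V(G)$; and $H_0$ is connected, because any $H$-walk between two vertices of $G$ that enters some $D_v$ must leave through $v$ and can thus be short-cut to avoid $D_v$ entirely. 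So $H_0$ is a spanning, connected, $2$-regular sub-multigraph of the simple graph $G$; it uses no doubled edge (that would make a pair of vertices a separate component, impossible when $|V(G)|\ge 3$), so $H_0$ is a Hamiltonian cycle of $G$. Conversely, a Hamiltonian cycle $C$ of $G$, taken with multiplicity $1$ on each edge and combined with the Completeness assignment inside each $D_v$, makes every $G$-vertex have degree $2+(2b-2)=2b$ and every depletor vertex have even degree in $[2a,2b]$, with the union connected --- a spanning even $[2a,2b]$ factor of $2b\cdot G'$. Since each $D_v$ has $\bigO(1)$ vertices, this is a polynomial-time reduction.

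The substance of the proof, and the step I expect to be the main obstacle, is constructing the depletor and verifying Soundness and Completeness for every $1\le a\le b$. Soundness has to be extracted from just two facts --- all degrees in $H$ are even, and the internal vertices of $D_v$ are reachable in $H$ only through the port --- so the internal structure of $D_v$ must force the port-multiplicity up to exactly $2b-2$. The case $a=1$ is easy: take $D_v$ to be $b-1$ pendant vertices at $v$; each carries its whole degree on its single edge, hence has even degree $\ge 2$ and contributes $\ge 2$ to $\delta_v$, giving $\delta_v\ge 2b-2$, with equality when every pendant edge gets multiplicity $2$. The difficulty is when $a>1$: then every vertex of $D_v$ must itself carry degree $\ge 2a$, so pendant vertices overshoot $2b-2$, and one must instead use small rigid substructures that force the port-incident multiplicities \emph{downward} to the right total (for instance, a pendant neighbour of the port that also lies in a triangle). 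Designing a bounded-size depletor whose even-degree and connectivity constraints rigidly pin $\delta_v$ to $2b-2$ uniformly in $a\le b$, and dispatching the boundary cases ($a=1$, $a=b$), is where I would concentrate the effort: I would assemble the depletor from a short chain of such rigid cells and do the parity/connectivity bookkeeping once.
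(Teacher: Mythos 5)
Your reduction skeleton is exactly the paper's: attach to every vertex a constant-size depletor that rigidly eats $2b-2$ of the port's degree budget, so that the restriction of the factor to $G$ is forced to be a connected spanning $2$-regular subgraph. Your bookkeeping around that skeleton (parity of $\delta_v$ by handshaking over the internal vertices, $\deg_{H_0}(v)=2$, ruling out doubled edges, short-cutting walks through $D_v$ to get connectivity of $H_0$) is correct and in fact more careful than the paper's two-sentence version. The problem is that you have explicitly deferred the one step that carries the content of the lemma: you construct the depletor only for $a=1$ (pendant vertices), correctly observe that pendants overshoot when $a>1$, and then leave the $a>1$ gadget as ``where I would concentrate the effort.'' As written, Soundness and Completeness are assumed, not proved, for all but the easiest case, so the proof is incomplete.

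The gap is easy to close, and your parenthetical hint is essentially the right object. The paper's depletor is simply a triangle through the port: attach $b-1$ triangles to $v$, each consisting of two new vertices $x,y$ and edges $vx, vy, xy$. Completeness: label $vx=vy=1$ and $xy=2a-1$ (the walk steps up, shuttles across $xy$ a total of $2a-1$ times, and steps back down); then $x$ and $y$ each get even degree $2a\in[2a,2b]$, the triangle is connected to $v$, and the triangle contributes exactly $2$ to $v$, so $b-1$ triangles contribute exactly $2b-2$ for every $1\le a\le b$. Soundness: since $x$ and $y$ have even $H$-degree, the two port-incident labels $vx+vy$ have even sum, and this sum cannot be $0$ (else $x,y$ are disconnected), so each triangle contributes at least $2$, giving $\delta_v\ge 2b-2$. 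Note this single gadget handles all $a$ uniformly --- there is no need for the ``short chain of rigid cells'' or separate boundary-case analysis you anticipate, and no need for a different gadget when $a=1$. Until some such gadget is exhibited and these two properties verified, your argument does not yet establish the lemma.
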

\begin{proof}
    We reduce from $[1,1]\hHAM$ (i.e. the standard Hamiltonian cycle problem). Given a graph $G$, we build a new graph $G'$ by gluing to each vertex $b - 1$ triangles, as shown in \cref{fig:glue-some-triangles}. In any $[a,b]\hHAM$ cycle each of these triangles must be visited --- this accounts for $b-1$ of the visits for each of the vertices in $G$, so the rest of the walk when restricted to $G$ must be a Hamiltonian cycle. For the other direction, note that it is possible to satisfy the visit requirements of the each triangle by stepping up to it, walking back and forth $a - 1$ times, and then stepping back down, contributing only $1$ additional visit to the vertex we attached the triangles to. So, if $G$ is Hamiltonian then $G'$ is $[a,b]\hHAM$.
\end{proof}

\begin{figure}[htb]
    \centering
    \includegraphics[width=0.4\linewidth]{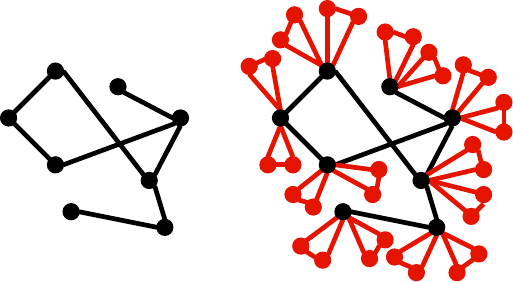}
    \caption{Reducing $[1,1]\hHAM$ to $[3,3]\hHAM$.}
    \label{fig:glue-some-triangles}
\end{figure}
In general a depletor gadget is some small graph that we glue to the starting graph $G$ that forces vertices to be visited many times. In \cref{lem:basic}, triangles are serving as depletors, since they each contribute a visit to the vertex they're glued to. In future sections, we will need depletors with stronger properties, such as forcing an attachment edge to be traversed many times.

\paragraph{Introducing Labellings}

Given an $[a,b]\hHAM$ cycle, we can label each edge of the graph $G$ with the number of times it's traversed in the cycle. These labels satisfy the following properties:

\begin{itemize}
    \item the sum of the labels on the edges adjacent to each vertex is an even number at least $2a$ and at most $2b$, and
    \item the subgraph on edges with non-zero labels is connected.
\end{itemize}

Now, given such a labeling, we could create a multigraph by including each edge with multiplicity given by its label. Since all vertices in this multigraph have even degree, we can (efficiently) find an Eulerian cycle in the multigraph, which is then an $[a,b]\hHAM$ cycle in the original graph. So, finding a labeling with these properties is equivalent to finding an $[a,b]\hHAM$ cycle; this perspective will make the problem easier to reason about.

\paragraph{A Technical Note}
Finally, we remark that, throughout this paper, degree bounds will be assumed to
be greater than or equal to $3$. All problems we consider are trivial in graphs
of maximum degree $2$, and so we will refrain from mentioning this special case.
We will use ``maximum degree'' to refer to any upper bound on the degrees of a
graph. In particular, a graph with no degree larger than $d$ is a max-degree $d$
graph even if it does not have a vertex achieving that degree. In directed
graphs, degree refers to the sum of in-degree and out-degree.

Another piece of notation that is important to note is that we will write $x \mod y$ to denote the remainder when $x$ is divided by $y$ (as opposed to the equivalence class $x+y\Z$).

\section{Complexity characterization in $d$-regular graphs}

\subsection{Easiness}
We claim that $[a,b]\hHAM$ can be solved in polynomial time for $d$-regular graphs, as long as either $d$ is even and $b \geq d/2$, or $d$ is odd and $b \geq d$. The reason for this is simple: the following two lemmas demonstrate that \emph{any} connected $d$-regular graph is $[a,b]\hHAM$ for such $b$.

\begin{lem}\label{lem:eulerit}
    For even $d$, any $d$-regular graph is $[d/2,d/2]\hHAM$. For odd $d$, any $d$-regular graph is $[d,d]\hHAM$.
\end{lem}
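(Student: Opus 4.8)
The plan is to exhibit, for each of the two cases, an explicit closed walk with the right visit counts, using the edge-labeling perspective from the preliminaries: it suffices to assign to each edge of $G$ a nonnegative integer label so that (i) the labels around every vertex sum to an even number in the allowed range, and (ii) the support of nonzero labels is connected. Since $G$ is connected, condition (ii) is automatic as long as we never zero out an edge, so the real content is getting the degree sums right.

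For even $d$, the clean choice is to label every edge $1$. Then the label sum at each vertex is exactly $d$, which is even by assumption and lies in $[d, d] = [2(d/2), 2(d/2)]$, so this is a valid $[d/2, d/2]$-labeling and the induced Euler tour visits every vertex exactly $d/2$ times. For odd $d$, labeling every edge $1$ gives an odd sum $d$ at each vertex, which violates the parity condition, so instead I would label every edge $2$. Now the sum at each vertex is $2d$, which is even and lies in $[2d, 2d] = [2d, 2d]$, matching the target $[d,d]$; equivalently, this walk is an Euler tour of $2 \cdot G$, which exists because every vertex of $2\cdot G$ has (even) degree $2d$ and the graph is connected.

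There is essentially no obstacle here beyond checking the arithmetic and invoking connectivity of $G$ (so that $2\cdot G$ is also connected and the Euler tour exists). The one point worth stating carefully is why connectivity of $G$ is needed at all — without it there is no spanning closed walk — and that this is exactly the hypothesis being used; note the lemma as stated should be read with ``connected $d$-regular graph'' in mind, consistent with the sentence preceding it in the excerpt. I would write the two cases in two or three lines each, citing the equivalence between $[a,b]\hHAM$ cycles and valid labelings established in the preliminaries, and the standard fact that a connected multigraph with all even degrees has an Euler tour.
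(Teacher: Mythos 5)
Your proposal is correct and matches the paper's proof: for even $d$ you take an Euler tour of $G$ itself (every edge labeled $1$, visiting each vertex $d/2$ times), and for odd $d$ you take an Euler tour of the doubled multigraph $2\cdot G$ (every edge labeled $2$, visiting each vertex $d$ times). The paper states this directly in terms of Euler tours rather than labelings, but the argument is the same.
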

\begin{proof}
    If $d$ is even, then the graph is Eulerian --- taking an Euler tour will visit every vertex exactly $d/2$ times. If $d$ is odd, we can duplicate all its edges to get a $2d$-regular multigraph --- an Euler tour in that multigraph visits every vertex exactly $d$ times.
\end{proof}

Now we use Tutte's two-factor theorem to extend \cref{lem:eulerit} to more $[a,b]$. Specifically, the form of Tutte's two-factor theorem that we use is:
\begin{fact}\label{thm:two-factor}
   If $H$ is a $2k$-regular multigraph, its edges can be partitioned into $k$ distinct $2$-factors (i.e. $2$-regular sub-multigraphs).
\end{fact}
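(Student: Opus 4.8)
The plan is to prove this via Petersen's classical reduction of $2$-factorization to perfect-matching decomposition of an auxiliary bipartite multigraph. First I would reduce to the connected case: each connected component of $H$ is itself $2k$-regular, so a partition of the edge set of each component into $k$ $2$-factors can be assembled component-by-component into a partition of $E(H)$ into $k$ spanning $2$-regular sub-multigraphs. So assume $H$ is connected.

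Since $H$ is connected and every degree is even, $H$ has an Euler tour. Orient each edge of $H$ in the direction it is traversed by this tour. Along the tour, each visit to a vertex $v$ uses exactly one incoming and one outgoing arc, so after orienting we have $\mathrm{indeg}(v) = \mathrm{outdeg}(v) = k$ for every $v$ (loops, if present, simply become arcs $v \to v$ and are handled uniformly by everything below; if one prefers, subdivide each loop once at the start, which preserves $2k$-regularity of the non-loop vertices and leaves the degree-$2$ subdivision vertex inert). Now build the bipartite ``split'' multigraph $B$ with vertex classes $\{v^{+} : v \in V(H)\}$ and $\{v^{-} : v \in V(H)\}$, placing one edge between $v^{+}$ and $w^{-}$ for each arc $v \to w$ of the oriented $H$. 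Because every vertex of the oriented $H$ has out-degree $k$ and in-degree $k$, the multigraph $B$ is $k$-regular and bipartite.

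Next I would invoke the fact that a $k$-regular bipartite multigraph decomposes into $k$ perfect matchings. This follows by induction on $k$: Hall's condition holds in any regular bipartite multigraph (for $S$ in one class, the $k|S|$ edge-endpoints in $S$ land on at least $|S|$ vertices of the other class), so there is a perfect matching $M$; deleting $M$ leaves a $(k-1)$-regular bipartite multigraph, and we recurse. (Equivalently, this is König's edge-coloring theorem.) Finally, translate back: a perfect matching $M_i$ of $B$ picks, for each vertex $v$, exactly one edge at $v^{+}$ (one outgoing arc of $v$) and exactly one edge at $v^{-}$ (one incoming arc of $v$); the corresponding set of edges of $H$ is therefore a spanning sub-multigraph in which every vertex has degree exactly $2$, i.e.\ a $2$-factor. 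Distinct matchings $M_1, \dots, M_k$ use disjoint arc sets and together cover all arcs, so the induced $2$-factors partition $E(H)$.

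The main obstacle is not conceptual depth but the multigraph bookkeeping: one must be careful that parallel edges cause no trouble (they do not — they just become parallel edges of $B$ and parallel edges within a $2$-factor), that loops are dealt with explicitly (via the subdivision remark above, or by noting the Euler-tour orientation turns each loop into an arc $v \to v$, contributing one to both $\mathrm{indeg}(v)$ and $\mathrm{outdeg}(v)$), and that the matching-decomposition induction is stated for multigraphs rather than simple graphs. I expect the cleanest write-up to state the ``$\mathrm{indeg} = \mathrm{outdeg} = k$ after Euler orientation'' step and the matching-to-$2$-factor correspondence carefully, and to cite König/Hall for the decomposition of $B$.
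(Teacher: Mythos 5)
Your proposal is correct, but there is nothing in the paper to compare it against: the paper states this as a \emph{Fact} and gives no proof, treating it as a classical result (it calls it Tutte's two-factor theorem; it is usually attributed to Petersen, 1891). The argument you give --- reduce to a connected component, take an Euler tour, orient edges along it so that every vertex has in-degree and out-degree $k$, pass to the bipartite split multigraph $B$ on $\{v^+\}\cup\{v^-\}$, decompose the $k$-regular bipartite multigraph $B$ into $k$ perfect matchings via Hall/K\"onig, and pull each matching back to a spanning $2$-regular sub-multigraph --- is exactly the standard proof of that classical theorem, and every step is sound for multigraphs. One small quibble: your parenthetical fallback of subdividing each loop ``once at the start'' does not preserve $2k$-regularity (the subdivision vertex has degree $2$), so as written that variant would need the subdivision vertex contracted back at the end; but your primary treatment of loops (a loop becomes an arc $v\to v$, contributing one to each of $\mathrm{indeg}(v)$ and $\mathrm{outdeg}(v)$, and a matching edge $v^+v^-$ coming from a loop gives $v$ degree $2$ in the resulting factor) is already complete, and in the paper's only application of this fact the multigraph $H$ arises from duplicating edges of a loopless graph, so loops never occur. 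If the authors wanted a self-contained proof rather than a citation, your write-up would serve.
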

Using \cref{thm:two-factor} we obtain:

\begin{lem}[Broersma, G{\"o}bel \cite{broersmakwalks}]
    If a graph $G$ is $[b,b]\hHAM$ for some $b$, then it's also $[b', b']\hHAM$ for all $b' > b$.
\end{lem}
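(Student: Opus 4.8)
The plan is to show that an $[b,b]\hHAM$ walk in $G$ yields an $[b',b']\hHAM$ walk for every $b' > b$; by induction it suffices to handle $b' = b+1$. Take an $[b,b]\hHAM$ closed walk and pass to its edge labeling: a multiset of edges (each edge appearing with multiplicity equal to its traversal count) forming a connected multigraph $H$ on all of $V(G)$ in which every vertex has degree exactly $2b$. I want to augment $H$ to a connected multigraph $H'$ on $V(G)$ with every degree exactly $2(b+1)$, and then take an Euler tour of $H'$.

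The key step is to produce, on the vertex set $V(G)$, some $2$-regular multigraph $F$ all of whose edges lie in $G$, which I can then add to $H$: the union $H' = H \cup F$ has every degree equal to $2b + 2 = 2(b+1)$, stays connected (since $H$ already was), and consists only of edges of $G$, so its Euler tour is a valid $[b+1,b+1]\hHAM$ walk. To get $F$, I apply \cref{thm:two-factor} (Tutte's two-factor theorem). But I should apply it not to $H$ but to something that contains every vertex of $G$ and is $2k$-regular using only edges of $G$. The natural choice: since $H$ itself is $2b$-regular and connected on $V(G)$ with edges in $G$, Tutte's theorem partitions $E(H)$ into $b$ edge-disjoint $2$-factors $F_1, \dots, F_b$; take $F = F_1$. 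Then $F$ is a spanning $2$-regular multigraph whose edges are edges of $G$, and $H' = H \cup F$ works.

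The one subtlety — and the main thing to get right — is connectivity and the spanning condition: I need $H'$ to be connected and to touch every vertex of $V(G)$. Both are immediate because $H' \supseteq H$ and $H$ already had both properties (it came from an $[b,b]\hHAM$ walk, which visits every vertex and whose edge-support is connected). The even-degree and degree-range conditions for $H'$ hold by construction: every degree is exactly $2(b+1) \in [2(b+1), 2(b+1)]$. So an Euler tour of $H'$ exists and visits every vertex exactly $(b+1)$ times, witnessing $[b+1,b+1]\hHAM$. Iterating gives $[b',b']\hHAM$ for all $b' > b$, completing the proof.

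I expect no serious obstacle here; the only thing to be careful about is invoking Tutte's two-factor theorem for \emph{multigraphs} (which is exactly the form stated in \cref{thm:two-factor}), since $H$ may well have parallel edges, and then noting that a single $2$-factor from the partition suffices rather than needing to reason about the whole partition.
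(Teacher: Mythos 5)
Your proof is correct and is essentially identical to the paper's: both take the $2b$-regular multigraph $H$ arising from the $[b,b]\hHAM$ walk, extract a spanning $2$-factor via Tutte's two-factor theorem (\cref{thm:two-factor}), add an extra copy of its edges to obtain a connected $2(b+1)$-regular multigraph, take an Euler tour, and iterate. Your extra care about connectivity being inherited from $H$ and about the multigraph form of the theorem is sound but matches what the paper implicitly relies on.
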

\begin{proof}
Let $H$ be the multigraph corresponding to a $[b,b]\hHAM$ walk on $G$. This is a $2b$-regular multigraph, so we can find a multigraph two-factor by Tutte's two-factor theorem. Duplicating all edges in that two-factor, we have a multigraph corresponding to a $[b+1, b+1]\hHAM$ walk on $G$. Repeating this process gives the desired claim. 
\end{proof}

Since a $[b,b]\hHAM$ cycle is also an $[a, b]\hHAM$ cycle for $a < b$, we've shown the easiness part of the claim.

\subsection{Hardness}
\subsubsection{Depletors}
The first step in the hardness argument is to construct depletors. We would like a small structure that can always be part of an $[a,b]\hHAM$ cycle as long as it's visited at least once. 

We will describe such a $d$-regular depletor for even $d$ with two attachment edges and $2d$ vertices. Add an edge between $v_i$ and $v_j$ whenever $(i - j) \mod 2d \in \{-d/2, \dots, d/2\}$. Now, cut the edges $(v_{2d}, v_{3d/2})$ and $(v_1, v_{d/2+1})$, and add the edge $(v_{3d/2}, v_{d/2+1})$ --- note that this is in fact a new edge since 
$$(3d/2-d/2-1)\bmod 2d\not\in \{-d/2, \dots, d/2\}.$$ So, if we add an edge leaving the gadget from each of $v_{2d}$ and $v_1$, the construction will be $d$-regular.

\begin{figure}[htb]
    \centering
    \includegraphics[width=0.5\linewidth]{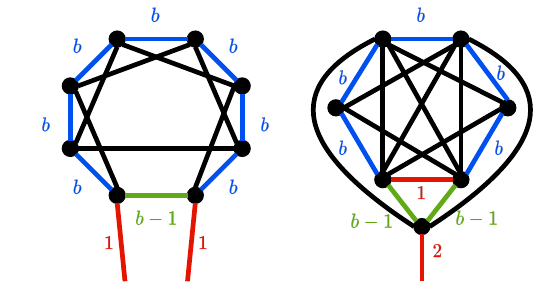}
    \caption{Labellings corresponding to $[b,b]$ walks on 4-regular and 5-regular depletors, respectively.}
    \label{fig:regular-depletors}
\end{figure}

\begin{lemma}\label{lem:evendepletor}
    For any $b \geq a \geq 1$, there is a valid $[a,b]\hHAM$ traversal of this even-regularity depletor in which each attachment edge is traversed exactly once.
\end{lemma}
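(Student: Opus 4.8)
The plan is to exhibit an explicit edge-labelling of the depletor that (i) puts label exactly $1$ on each of the two attachment edges, (ii) has even degree-sum in $[2a,2b]$ at every internal vertex, and (iii) is connected on its support. By the labelling equivalence established in the preliminaries, producing such a labelling suffices. The natural first move is to understand the "uncut" graph: the circulant-like graph on $v_1,\dots,v_{2d}$ where $v_i \sim v_j$ iff $(i-j)\bmod 2d \in \{-d/2,\dots,d/2\}$ is $d$-regular (each vertex has $d/2$ neighbours on each side, plus itself excluded), and—crucially—it is a union of $d/2$ Hamiltonian-type cycles: for each step size $s \in \{1,\dots,d/2\}$, the edges $\{v_i, v_{i+s}\}$ form either one or two cycles through all $2d$ vertices. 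So I would first describe, on the uncut circulant, a labelling that uses each "distance-$s$" cycle either $0$ or $2$ times, tuned so that every vertex sees total degree-sum exactly $2b$ (or any target even value in $[2a,2b]$): take $\ell$ of the $d/2$ parallel cycles with label $2$ and the rest with label $0$, giving degree-sum $4\ell$; then a small local fix-up near one vertex lets us hit odd multiples of $2$ as well, so all even values up to $d$-ish times $2$ are reachable, in particular the target can be met for any $b\ge a$ with $b<d/2$ (the regime where the gadget is used).

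Next I would handle the surgery. Cutting $(v_{2d}, v_{3d/2})$ and $(v_1, v_{d/2+1})$ and adding $(v_{3d/2}, v_{d/2+1})$ changes the local picture only at the four vertices $v_1, v_{2d}, v_{3d/2}, v_{d/2+1}$. The idea is: route the walk so that it enters the gadget along the attachment edge at $v_1$ (label $1$), travels along the new edge $(v_{d/2+1}, v_{3d/2})$ to reach $v_{3d/2}$ and the "far side", does all of its depleting there using the parallel-cycle structure, then exits along the attachment edge at $v_{2d}$ (label $1$). Concretely I'd pick a base labelling on the cut graph: start from the all-label-$2$ assignment on enough of the distance-$s$ cycles to reach degree-sum $2a$ everywhere, then subtract/adjust along the paths created by the two cuts so that $v_1$ and $v_{2d}$ end up with an odd contribution from inside, which the single attachment edge completes to an even number $\ge 2a$. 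The new edge $(v_{3d/2}, v_{d/2+1})$ gets label $2$ (say), keeping those two vertices even. Connectivity is automatic once at least one full parallel cycle is used with nonzero label and the attachment edges hang off it.

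The main obstacle I expect is the bookkeeping at the four modified vertices: the two deleted edges have step sizes $d/2$ (namely $v_{2d}\!\sim\!v_{3d/2}$ is distance $d/2$, and $v_1\!\sim\!v_{d/2+1}$ is distance $d/2$), so removing them breaks the distance-$d/2$ cycle(s) into paths, and I need the parity of each of $v_1, v_{2d}$ to flip (to absorb the odd attachment edge) while $v_{3d/2}, v_{d/2+1}$ stay even (they absorb the new edge, which has even label, plus the broken path). Getting a single consistent choice of which cycles to use at multiplicity $2$, plus a localized correction path between $v_{3d/2}$ and $v_{d/2+1}$, that simultaneously achieves the right parity at all four special vertices and total degree-sum in $[2a,2b]$ at every vertex, is the delicate part; everything else is either the circulant decomposition (routine) or the labelling-equivalence lemma (already available). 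I would finish by checking the two small cases $d=4$ and $d=6$ by hand against Figure~\ref{fig:regular-depletors} to make sure the general recipe specializes correctly.
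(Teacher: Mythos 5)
There is a genuine gap here: what you have written is a plan whose central step is explicitly left unresolved. You yourself flag ``the delicate part'' --- finding a consistent choice of parallel cycles plus a local correction that fixes the parities at the four surgery vertices $v_1, v_{2d}, v_{3d/2}, v_{d/2+1}$ while keeping every degree-sum in $[2a,2b]$ --- and that is precisely the content of the lemma, so deferring it means the lemma is not proved. Moreover, the scaffolding you build toward it has problems. The distance-$s$ edge class on $\Z_{2d}$ decomposes into $\gcd(s,2d)$ cycles of length $2d/\gcd(s,2d)$, so for most $s$ (e.g.\ $s=d/2$, which gives length-$4$ cycles) these are \emph{not} ``one or two cycles through all $2d$ vertices''; consequently ``connectivity is automatic once one full parallel cycle is used'' is false, and any vertex missed by your chosen cycles has degree-sum $0 < 2a$. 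Finally, your multiplicity-$\{0,2\}$ scheme caps the achievable degree-sum at roughly $2d$ and you explicitly restrict to $b < d/2$, but the lemma is stated for all $b \geq a \geq 1$.

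The approach is also far heavier than necessary, and the extra weight is what creates the parity headache. The paper's proof is one line: the distance-$1$ Hamiltonian cycle $v_1 v_2 \cdots v_{2d} v_1$ survives the surgery (the cut edges have distance $d/2$, not $1$), so label every edge $(v_i,v_{i+1})$ with $b$, the edge $(v_{2d},v_1)$ with $b-1$, each attachment edge with $1$, and everything else with $0$. Every internal vertex sees $b+b=2b$; each of $v_1$ and $v_{2d}$ sees $b+(b-1)+1=2b$; all sums are even and equal to $2b \in [2a,2b]$; and the support contains the spanning path $v_1\cdots v_{2d}$, so it is connected. The lesson is that you do not need even labels on individual edges (hence no per-cycle multiplicity bookkeeping) --- you only need even \emph{sums} at vertices, and a single odd-labelled cycle through all vertices, adjusted by one on the edge joining the two attachment vertices, achieves this immediately for every $b \geq a \geq 1$.
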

\begin{proof}
    Assign each edge $(v_i, v_{i+1})$ of the gadget to be traversed $b$ times, and assign the edge $(v_{2d},v_1)$ to be traversed $b-1$ times; see \cref{fig:regular-depletors}. 
\end{proof}

Note that this gadget required two attachment edges. This is a necessary consequence of the regularity constraint: since $d$ is even, there must be an even number of half-edges within the gadget, meaning that the number of edges leaving the gadget must be even. 

When $d$ is odd, we can construct a $d$-regular depletor needing only one
leaving edge. Start with the complete graph $K_{d + 1}$, and then delete all
edges $(v_{i}, v_{i + (d+1)/2})$ except for the edge $(v_1, v_{(d+1)/2 + 1})$.
At this point, every vertex has degree $(d-1)$, except for $v_1$ and $v_{(d+1)/2
+ 1}$, which have degree $d$. So, we add a $(d+2)$-th vertex $v^*$, with edges
to every other vertex except $v_1$ and $v_{(d+1)/2 + 1}$, and a single edge
leaving the construction --- note that the result is now $d$-regular.

\begin{lemma}\label{lem:odddepletor}
    For any $b \geq 2$, there is a valid $[a,b]\hHAM$ traversal of this odd-regularity depletor in which the attachment edge is traversed exactly twice.
\end{lemma}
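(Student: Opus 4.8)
The plan is to place nonzero labels only on the edges of a single Hamiltonian cycle of the depletor, alternating those labels between $b-1$ and $b+1$, exactly in the spirit of \cref{lem:evendepletor}; the one new wrinkle is that this cycle has \emph{odd} length, and it is precisely that parity that makes things work out. Write $H$ for the depletor with its attachment edge removed; its vertices are $v_1,\dots,v_{d+1},v^*$, so $|V(H)| = d+2$ is odd. First I would establish that $H$ has a Hamiltonian cycle $C$ through $v^*$. Writing $C = (w_0 = v^*,\, w_1,\, \dots,\, w_{d+1},\, w_0)$ and letting $f_j$ be the edge $w_j w_{j+1}$ (indices mod $d+2$), I would then label $f_j$ with $b-1$ when $j$ is even and with $b+1$ when $j$ is odd, give the attachment edge the label $2$, and give every remaining edge of the depletor the label $0$; compare \cref{fig:regular-depletors}.

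To check validity: each $w_j$ with $1 \le j \le d+1$ is incident in $C$ to $f_{j-1}$ and $f_j$, one of even and one of odd index, so the labels at it sum to $(b-1)+(b+1)=2b$. Since $d$ is odd, $d+1$ is even, so $f_0$ and $f_{d+1}$ both get label $b-1$; hence the labels at $v^* = w_0$ sum to $(b-1)+(b-1)+2 = 2b$ as well. Thus every vertex sum equals $2b$, which is even and lies in $[2a,2b]$ because $a \le b$ (indeed this is a $[b,b]$-style traversal). The support is $C$ together with the attachment edge, which spans all of $H$ and is connected — here is the only place $b \ge 2$ enters, since it guarantees $b-1 \ge 1$ so that no edge of $C$ falls out of the support. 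So the labelling is consistent with any connected host-graph labelling glued in through the attachment edge, and the attachment edge is traversed exactly twice. (This also explains why $b \ge 2$ is needed at all: $v^*$ must have positive internal support for connectivity, so together with the attachment's contribution of $2$ its total is at least $4$.)

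The only real content is producing the Hamiltonian cycle $C$ through $v^*$, and I expect that to be the sole step requiring care. Recall that on $v_1,\dots,v_{d+1}$ the graph $H$ is $K_{d+1}$ with a matching of $(d-1)/2$ "diameter" edges deleted (the one retained diameter edge only helps), and that $v^*$ is adjacent to every $v_i$ except $v_1$ and $v_{(d+1)/2+1}$. It therefore suffices to choose two neighbors $v_s, v_t$ of $v^*$ and find a Hamiltonian path from $v_s$ to $v_t$ in $K_{d+1}$ minus that matching; prepending and appending $v^*$ closes it into $C$. For $d \ge 5$ take $v_s = v_2$ and $v_t = v_3$ (neither is $v_1$ or $v_{(d+1)/2+1}$): in $K_{d+1}$ minus a matching of size $(d-1)/2$ the only non-adjacent pairs are the matched pairs, each of degree $d-1$, so every non-adjacent pair has degree sum $2(d-1) \ge (d+1)+1$ for $d \ge 4$, and Ore's Hamiltonian-connectedness criterion applies; the case $d = 3$ (where $v^*$ is adjacent to $v_2, v_4$, and one exhibits, e.g., the path $v_2, v_1, v_3, v_4$) is checked by hand. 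Even where the Ore condition is invoked, the graph is dense enough that an explicit Hamiltonian path could be written down instead.
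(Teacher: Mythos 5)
Your proof is correct, and it is a genuine (if mild) variant of the paper's. The paper's own argument also builds the labelling around a Hamiltonian cycle of the gadget, but it labels that cycle uniformly with $b-1$ and then tops each $v_i$ up to sum $2b$ by adding one extra traversal along a second, $v^*$-avoiding set of ``skip'' edges, with the attachment edge supplying the remaining $2$ at $v^*$. You instead exploit the fact that the Hamiltonian cycle has odd length $d+2$ to alternate labels $b-1$ and $b+1$, parking the unavoidable parity defect at $v^*$ so that both of its cycle edges read $b-1$ and the attachment edge again contributes the last $2$; this removes the need for any auxiliary edge set. You also prove something the paper only asserts via a figure, namely that the gadget really does have a Hamiltonian cycle through $v^*$: your reduction to a Hamiltonian path between two neighbors of $v^*$ in $K_{d+1}$ minus a matching, handled by Ore's Hamiltonian-connectedness criterion for $d\ge 5$ and by hand for $d=3$, is sound (the matched pairs have degree sum $2(d-1)\ge (d+1)+1$ exactly when $d\ge 4$). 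Both labelings give every depletor vertex sum exactly $2b$, hence $b$ visits, and both need $b\ge 2$ for the same reason you identify: $v^*$ must retain positive internal support for the nonzero-label subgraph to be connected. Net effect: your construction is slightly leaner and your existence step is more rigorous than the paper's.
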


\begin{proof}
    Assign all the edges in the loop passing through all vertices to be traversed $b-1$ times, and then add one more traversal to the edges $(v_i, v_{(i + 1) \bmod d + 2})$; see \cref{fig:regular-depletors}.
\end{proof}

\subsubsection{Reductions}

These depletor gadgets let us use the construction from \cref{lem:basic} to resolve the case of even $d$.

\begin{lemma}
    For even $d$, if $a \leq b < d/2$, then $[a,b]\hHAM$ is $\NP$-hard in $d$-regular graphs.
\end{lemma}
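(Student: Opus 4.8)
The plan is to reduce from the Hamiltonian cycle problem on $r$-regular graphs, where $r \defeq d - 2(b-1)$. Since $d$ is even this $r$ is even, and since $b \leq d/2 - 1$ we have $r \geq 4$; Hamiltonicity is $\NP$-hard on $r$-regular graphs for every $r \geq 3$ (the case $r = 3$ is \cite{GJ3}, and the even cases follow from it by a routine degree-raising). Given an $r$-regular graph $G$, I build a $d$-regular graph $G'$ exactly in the style of \cref{lem:basic}: glue $b - 1$ copies of the even-regularity depletor of \cref{lem:evendepletor} onto each vertex $v$ of $G$, attaching each copy to $v$ by both of its attachment edges. Each vertex of $G$ then gains $2(b-1) = d - r$ incident edges and so has degree $d$, while the depletor vertices already have degree $d$; thus $G'$ is $d$-regular and is built in polynomial time. (When $b = 1$ this degenerates to $G' = G$ and the reduction is directly from $d$-regular Hamiltonicity.)

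For the forward direction, suppose $G$ has a Hamiltonian cycle $C$. Label every edge of $C$ with $1$ and every other edge of $G$ with $0$; inside each glued depletor use the labelling from \cref{lem:evendepletor}, in which each of the two attachment edges is traversed once. Then every vertex of $G$ has incident labels summing to $2 + 2(b-1) = 2b$, every depletor vertex has incident labels summing to $2b$, all of these sums lie in $[2a,2b]$ and are even, and the nonzero-labelled edges form a connected graph (the cycle $C$ with the depletor traversals hanging off it). Passing to the corresponding Eulerian multigraph and taking an Euler tour gives an $[a,b]\hHAM$ cycle, so $G'$ is $[a,b]\hHAM$.

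For the reverse direction, suppose $G'$ is $[a,b]\hHAM$ and fix a corresponding edge-labelling; form the multigraph in which each edge of $G'$ has multiplicity equal to its label. This multigraph is connected, every degree is even, and every vertex of $G'$ has degree at most $2b \leq d - 2$. Each glued depletor $D$ meets the rest of $G'$ only in the two attachment edges at its anchor vertex $v(D)$; since $D$'s vertices have positive degree they lie in the connected multigraph, so those two edges form a nonempty cut, hence (being a cut of an Eulerian multigraph) a cut of even size, so their multiplicities sum to at least $2$. Summing over the $b-1$ depletors anchored at a vertex $v$ of $G$, the depletor edges at $v$ account for at least $2(b-1)$ of $v$'s degree, leaving at most $2$ for the original edges of $G$ at $v$. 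Because no glued depletor joins two distinct vertices of $G$, connectivity of the multigraph forces the positively-labelled original edges to span and connect $V(G)$, so in fact the original edges at $v$ account for exactly $2$ of its degree. A doubled original edge is impossible (its other endpoint would then form, together with $v$, an isolated two-vertex component), so every vertex of $G$ meets exactly two original edges of multiplicity $1$; these edges form a $2$-regular connected spanning subgraph of $G$, i.e.\ a Hamiltonian cycle. Hence $G$ is Hamiltonian, completing the reduction.

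I expect the main obstacle to be the degree bookkeeping rather than any single clever step. The parity of $d$ forces the source instance to be even-regular, and then the budget $2b$ per vertex has to be split so that the $b-1$ glued depletors consume exactly $2(b-1)$ of it and leave exactly $2$ for the walk on $G$ --- which is precisely why the source regularity must be taken to be $d - 2b + 2$, and precisely where the hypothesis $b < d/2$ (equivalently $2b \leq d-2$, equivalently $d - 2b + 2 \geq 4$) is used. Once the construction is set up this way, the even-cut argument and the connectivity argument in the reverse direction are routine.
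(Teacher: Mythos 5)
Your proposal is correct and follows essentially the same route as the paper: reduce from Hamiltonicity in $(d-2b+2)$-regular graphs, glue $b-1$ even-regularity depletors (both attachment edges) to each original vertex, and use the visit budget to force the leftover labels on original edges to form a Hamiltonian cycle. Your reverse direction (the even-cut argument and the exclusion of a doubled original edge) is spelled out in more detail than the paper's, but it is the same underlying counting argument.
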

\begin{proof}
    We reduce from $[1,1]\hHAM$ in $(d - 2b + 2)$-regular graphs, which is known to be $\NP$-hard as long as $(d - 2b + 2) > 2$ --- i.e. as long as $b < d/2$. Given a $(d - 2b + 2)$-regular graph $G$, we construct $G'$ by attaching $(b-1)$ copies of the $d$-regular depletor of \cref{lem:evendepletor} to each vertex. Recall that this depletor required two attachment edges, so $G'$ is $d$-regular. But now, note that in the process of visiting each depletor gadget once, each vertex from $G$ must be visited at least $b-1$ times, leaving room for only $1$ additional visit. So, if $G$ was not Hamiltonian, $G'$ is not $[a,b]$-Hamiltonian for any $a$. Conversely, if $G$ was Hamiltonian, then by \cref{lem:evendepletor} we can extend the Hamiltonian cycle to a $[b,b]\hHAM$ cycle by visiting each depletor exactly once. 
\end{proof}

This argument is not immediately sufficient for the case of odd regularity, however. Applying the same construction would yield the following:

\begin{lemma}
    For odd $d$, if $a \leq b < d - 1$, then $[a,b]\hHAM$ is \NP-hard in $d$-regular graphs.
\end{lemma}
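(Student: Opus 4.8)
The plan is to mimic the even-regularity reduction, replacing the even depletor by the odd-regularity depletor of \cref{lem:odddepletor}. I would reduce from $[1,1]\hHAM$ in $(d-b+1)$-regular graphs; since $b \le d-2$ we have $d-b+1 \ge 3$, so this is \NP-hard by the same classical fact about regular Hamiltonicity invoked in the even case. Given a $(d-b+1)$-regular graph $G$, form $G'$ by gluing $b-1$ copies of the odd-regularity depletor to each vertex $v$ of $G$, identifying each copy's single leaving edge with a new edge at $v$. Each copy adds exactly one edge at $v$, so $\deg_{G'}(v) = (d-b+1)+(b-1) = d$, and since the depletors are themselves $d$-regular, $G'$ is $d$-regular. (When $b=1$ nothing is glued and $G' = G$; the hypothesis $b < d-1$ still forces $d \ge 3$, so this base case is just regular Hamiltonicity.)

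For the forward direction, suppose $G$ has a Hamiltonian cycle. Label each of its edges $1$, giving label-sum $2$ at every vertex of $G$. For each glued depletor (present only when $b \ge 2$), use the traversal of \cref{lem:odddepletor}, in which the attachment edge carries label $2$. Then every $v \in G$ receives total label-sum $2 + 2(b-1) = 2b$, which is even and lies in $[2a,2b]$ since $a \le b$; the depletor-internal constraints hold by the lemma; and the labeled subgraph is connected because the Hamiltonian cycle spans $G$ and every depletor is internally connected and attached. Hence $G'$ is $[b,b]\hHAM$, so in particular $[a,b]\hHAM$.

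For the backward direction, take a labeling of an $[a,b]\hHAM$ cycle on $G'$. Every depletor must be visited, since all of its vertices require at least $a \ge 1$ visits, and its unique attachment edge is a bridge of $G'$, hence traversed an even, positive number of times --- at least $2$. So at each $v \in G$ the $b-1$ attachment edges contribute at least $2(b-1)$ to $v$'s label-sum, leaving at most $2$ for the edges of $G$ at $v$; moreover, if any attachment edge had label $\ge 4$ then the $G$-edges at $v$ would be forced to $0$, so in fact each attachment edge has label exactly $2$ and the $G$-edges at $v$ have label-sum exactly $2$. These $G$-edges cannot all be $0$: the depletors at $v$ form a pendant piece, so an all-zero $v$ would confine its component of the labeled subgraph to $\{v\}$ together with its depletors, contradicting connectivity since $G$ has at least two vertices. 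Thus the labeling restricted to $G$ is a connected spanning subgraph with label-sum exactly $2$ at every vertex; a doubled edge would form a $2$-vertex component, so it must be a single cycle of label-$1$ edges --- a Hamiltonian cycle of $G$.

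The step I expect to be most delicate is the backward bookkeeping: forcing each bridge to a depletor to be traversed \emph{exactly} twice (not merely at least twice) and ruling out that the induced labeling on $G$ is a disjoint union of cycles or contains a doubled edge. Both follow once the degree budget $2b$ at each $v \in G$ is seen to be exactly exhausted, together with connectivity, but these points need to be spelled out. It is also worth noting why the argument halts at $b < d-1$: with $b = d-1$ it would reduce from Hamiltonicity in $2$-regular graphs, which is trivial, so the remaining case $b = d-1$ genuinely requires a different construction, treated separately.
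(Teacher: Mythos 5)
Your proof is correct and takes essentially the same approach the paper intends: the paper leaves this lemma's proof implicit (``applying the same construction'' as the even-regularity case, with the one-attachment-edge, traversed-twice depletor of \cref{lem:odddepletor} replacing the even one), and your reduction from $[1,1]\hHAM$ in $(d-b+1)$-regular graphs with $b-1$ glued odd depletors per vertex is exactly that construction, with the bookkeeping spelled out correctly. Your closing observation about why the argument caps at $b < d-1$ also matches the paper's stated motivation for the separate treatment of $b = d-1$.
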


To be tight with our easiness results, however, we would like to show hardness of $[a,b]\hHAM$ when $b < d$. The gap here is coming from our base case: the hard problem we have to reduce from in $3$-regular graphs is $[1,1]\hHAM$, whereas we need to show that even $[1,2]\hHAM$ and $[2,2]\hHAM$ are hard. Fortunately, a slightly more involved reduction closes this gap.

\begin{lemma}\label{lem:oddreghard}
    For odd $d$, if $a \leq d-1$, then $[a,d-1]\hHAM$ is \NP-hard in $d$-regular graphs.
\end{lemma}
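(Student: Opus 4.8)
The plan is to reduce from $[1,1]\hHAM$ in $3$-regular graphs, which is \NP-hard by \cite{GJ3}. Given a connected cubic graph $G$, we build $G'$ by subdividing every edge of $G$ \emph{twice} --- so an edge $uv$ becomes a path $u\,w\,w'\,v$ --- and then restoring $d$-regularity by gluing $d-3$ copies of the odd-regularity depletor of \cref{lem:odddepletor} onto each original vertex and $d-2$ copies onto each subdivision vertex (each such depletor attaches along a single edge and is itself $d$-regular, so $G'$ is $d$-regular; for $d=3$ this just means: double-subdivide $G$ and hang one depletor off each subdivision vertex). We claim $G'$ is $[a,d-1]\hHAM$ if and only if $G$ is Hamiltonian.

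For the ``only if'' direction we pass to the labelling perspective. Fix an $[a,d-1]\hHAM$ labelling of $G'$. At any subdivision vertex $w$, the $d-2$ depletor edges incident to it are cut edges, so they carry even labels that are each at least $2$; since the total label at $w$ is at most $2(d-1)$, the two skeleton edges at $w$ must carry labels summing to exactly $2$ (summing to $0$ would leave $w$ and its depletors a separate component). Running this at both $w$ and $w'$, and invoking connectivity once more to rule out the configuration in which all the weight sits on the middle edge $ww'$, we get that every doubly subdivided edge of $G$ is of exactly one of two types: the path carries labels $(1,1,1)$, contributing one visit to each endpoint and joining their support components; or it carries $(2,0,2)$, contributing two visits to each endpoint but \emph{not} joining them. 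Next, at an original vertex $v$ the same accounting --- it has $d-3$ depletor edges, hence a skeleton budget of $4$ --- together with parity of label sums forces exactly one of the three edges at $v$ to be of $(2,0,2)$-type. Therefore the $(2,0,2)$-edges form a perfect matching of $G$ and the $(1,1,1)$-edges a spanning $2$-regular subgraph; and because two distinct original vertices of $G'$ lie in the same support component only if they are joined by a $(1,1,1)$-path, this $2$-factor is connected, i.e. it is a Hamiltonian cycle of $G$.

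For the ``if'' direction, given a Hamiltonian cycle $C$ of $G$ we label the double-subdivisions of the $C$-edges by $(1,1,1)$ and those of the complementary perfect matching $E(G)\setminus C$ by $(2,0,2)$, and traverse each depletor as in \cref{lem:odddepletor} (valid since $d-1\ge 2$) with its attachment edge used exactly twice. A direct check shows every skeleton vertex then has label sum exactly $2(d-1)$ and that the resulting support is connected, so this labelling corresponds to an $[a,d-1]\hHAM$ walk.

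The step we expect to be the real obstacle is making the reduction work uniformly down to $a=d-1$ --- equivalently, showing $[d-1,d-1]\hHAM$ is hard --- because there no vertex has any slack: each must be visited exactly $d-1$ times. This is exactly why edges are subdivided \emph{twice} and not once: a single subdivision only makes available a ``$(2,0)$'' alternative that feeds two extra visits into one endpoint, and one checks that no consistent choice of which endpoint each matching edge points to can then hit the required count at both of its endpoints. Doubling the subdivision creates the symmetric $(2,0,2)$ option, which is what closes the gap. Two bookkeeping points still need care: that the odd-regularity depletor admits a valid internal $[d-1,d-1]$ traversal whose attachment edge is used exactly twice (the $b=d-1$ case of \cref{lem:odddepletor}), and that for $a<d-1$ the ``if'' direction still goes through since the walk we construct visits every skeleton vertex exactly $d-1\ge a$ times.
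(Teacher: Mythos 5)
Your proof is correct and is essentially the paper's own argument: the same reduction from $[1,1]\hHAM$ in $3$-regular graphs via double subdivision (the paper calls it trisection), the same depletor counts ($d-3$ per original vertex, $d-2$ per subdivision vertex) using the single-attachment-edge depletor of \cref{lem:odddepletor}, and the same case analysis forcing each edge gadget to be of type $(1,1,1)$ or $(2,0,2)$ with exactly two $(1,1,1)$ gadgets at each original vertex, so that connectivity forces a Hamiltonian cycle. No substantive differences to report.
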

\begin{proof}
    We reduce from $[1,1]\hHAM$ in $3$-regular graphs. Let $G$ be a $3$-regular graph; we construct $G'$ as follows:
    \begin{itemize}
        \item Glue $d - 3$ depletors to each vertex in $G$.
        \item Trisect each of $G$'s original edges --- that is, replace them with paths containing $2$ intermediate nodes.
        \item Glue $d - 2$ depletors to each of the intermediate nodes just added.
    \end{itemize}
    Using the $d$-regular depletors of \cref{lem:odddepletor}, this construction is $d$-regular. We would like to show that $G'$ is $[d-1,d-1]$ Hamiltonian if $G$ is Hamiltonian, and otherwise not even $[1,d-1]$ Hamiltonian --- this suffices to prove the lemma. 

    \begin{figure}[htb]
    \centering
    \includegraphics[width=0.45\linewidth]{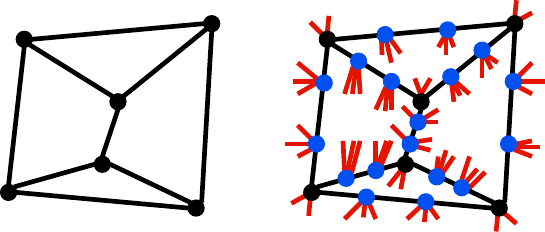}
    \caption{Reduction from $[1,1]\hHAM$ in $3$-regular graphs to $[4,4]\hHAM$ in $5$-regular graphs. Red outgoing edges represent depletors.}
    \label{fig:3-reduction}
\end{figure}

    First, note that if you ignore the depletors, each intermediate node must be visited exactly once, since the depletors contribute $d-2$ of the $d-1$ allowed visits. Thus, each edge gadget must be labeled either $1, 1, 1$ or $2, 0, 2$; see \cref{fig:edge-gadget}. 
    
    \begin{figure}[htb]
    \centering
    \includegraphics[width=0.20\linewidth]{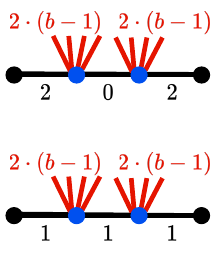}
    \caption{The two possible labelings of the edge gadget.}
    \label{fig:edge-gadget}
\end{figure}
    
    The depletors contribute $d-3$ visits to each of the original vertices, so each has an additional label budget of only $4$; this means that for every original vertex, exactly two of the three incident edge gadgets must be labeled $1,1,1$. So, the edges of $G$ corresponding to $1,1,1$ edge gadgets in $G'$ form a 2-factor; see \cref{fig:3-reduction-demonstration}. But now, suppose this $2$-factor had multiple connected components. Since $2,0,2$ edges do not create connectivity, this would mean that the labeling of $G'$ was not connected, and so couldn't correspond to a $[1, d-1]\hHAM$ walk. Thus, $G'$ can be $[1,d-1]\hHAM$ only if $G$ had a connected 2-factor, i.e. a Hamiltonian cycle.\\

    \begin{figure}[htb]
    \centering
    \includegraphics[width=0.3\linewidth]{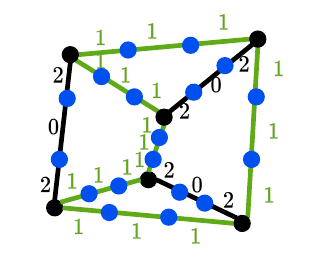}
    \caption{The edge gadgets labeled $1,1,1$ must correspond to a Hamiltonian path.}
    \label{fig:3-reduction-demonstration}
\end{figure}

    The reverse direction follows by the same logic: given a Hamiltonian cycle in $G$, we label all edge gadgets corresponding to the cycle $1,1,1$ and all edge gadgets not in the cycle $2,0,2$; visiting each depletor once gives a $[d-1, d-1]\hHAM$ labeling.

\end{proof}

Combining the above lemmas gives a complete characterization of the hardness of $[a,b]\hHAM$ in regular graphs.

\begin{theorem}\label{regular}
    When restricted to $d$-regular graphs, $[a,b]\hHAM$ is \NP-hard if either $d$ is even and $b < d/2$, or $d$ is odd and $b < d$. Otherwise, $[a,b]\hHAM \in \P$.
\end{theorem}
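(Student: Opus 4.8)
The plan is to assemble \cref{regular} directly from the lemmas already proved in this section, checking that the hardness and easiness ranges fit together exactly with no gap. The easiness direction is already complete: \cref{lem:eulerit} together with the Tutte-two-factor argument shows that every connected $d$-regular graph is $[b,b]\hHAM$ — hence $[a,b]\hHAM$ — whenever $d$ is even and $b \geq d/2$, or $d$ is odd and $b \geq d$. So the only work in the theorem is to verify that the hardness lemmas cover the complementary range, namely $b < d/2$ for even $d$ and $b < d$ for odd $d$ (with $1 \le a \le b$ throughout).

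For even $d$, this is immediate: the lemma reducing from $[1,1]\hHAM$ in $(d-2b+2)$-regular graphs handles exactly $a \le b < d/2$, since $(d-2b+2)>2$ iff $b<d/2$, and Garey--Johnson--Tarjan gives $\NP$-hardness of Hamiltonicity in $3$-regular (hence $k$-regular for any $k\ge3$) graphs. For odd $d$, I would combine the two odd-regularity hardness lemmas: the ``easy'' reduction (attaching $b-1$ depletors of \cref{lem:odddepletor}) gives hardness for $a \le b < d-1$, and \cref{lem:oddreghard} gives hardness for the remaining boundary value $b = d-1$ (for every $1 \le a \le d-1$). Together these cover $a \le b \le d-1$, i.e.\ all $b < d$, matching the easiness threshold $b \ge d$. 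I should also note the trivial observation that if $[a,b]\hHAM$ is hard in $d$-regular graphs then so is $[a',b]\hHAM$ for the relevant smaller $a'$ — but actually the lemmas are already stated for all $a \le b$, so no extra argument is needed.

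I expect there to be essentially no real obstacle here; the theorem is a bookkeeping corollary. The one thing worth being careful about is the $d=3$ case, where the only hard range is $b<3$, i.e.\ $b\in\{1,2\}$: $b=1$ is classical Hamiltonicity, and $b=2$ (with $a\in\{1,2\}$) is precisely the content of \cref{lem:oddreghard} with $d=3$ — this is the case the text flagged as the reason the more involved reduction was needed, so I'd make sure the write-up explicitly points out that $d-1$ can equal $2$. Beyond that, the proof is just: quote the easiness subsection for one direction, and quote the even-$d$ hardness lemma plus the two odd-$d$ hardness lemmas for the other, observing that their ranges are exhaustive.

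\begin{proof}
For the easiness direction, suppose $d$ is even and $b \ge d/2$, or $d$ is odd and $b \ge d$. By \cref{lem:eulerit}, every $d$-regular graph is $[d/2,d/2]\hHAM$ (resp.\ $[d,d]\hHAM$), and then by the Tutte-two-factor argument it is $[b',b']\hHAM$ for all larger $b'$; in particular it is $[b,b]\hHAM$, hence $[a,b]\hHAM$. So $[a,b]\hHAM$ restricted to $d$-regular graphs is trivial in this range.

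For the hardness direction, first suppose $d$ is even and $b < d/2$. Then $d - 2b + 2 > 2$, so Hamiltonicity is $\NP$-hard in $(d-2b+2)$-regular graphs \cite{GJ3}, and the reduction attaching $b-1$ copies of the depletor of \cref{lem:evendepletor} shows $[a,b]\hHAM$ is $\NP$-hard in $d$-regular graphs. Now suppose $d$ is odd and $b < d$, i.e.\ $b \le d-1$. If $b < d-1$, the reduction attaching $b-1$ copies of the depletor of \cref{lem:odddepletor} shows $[a,b]\hHAM$ is $\NP$-hard in $d$-regular graphs. If $b = d-1$, then $[a,d-1]\hHAM$ is $\NP$-hard in $d$-regular graphs by \cref{lem:oddreghard}. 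Together these cover all $1 \le a \le b < d$, completing the proof.
\end{proof}
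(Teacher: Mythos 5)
Your proposal is correct and matches the paper's approach exactly: the paper proves \cref{regular} simply by combining the easiness results (\cref{lem:eulerit} plus the Tutte two-factor lemma) with the three hardness lemmas (even $d$ with $b<d/2$; odd $d$ with $b<d-1$; and \cref{lem:oddreghard} for $b=d-1$), which is precisely the assembly you carry out. Your explicit check that the hardness and easiness ranges are complementary, including the $d=3$ boundary, is just a more careful write-up of the same bookkeeping.
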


\section{Complexity characterization in graphs of max-degree $d$}

We now have a tight understanding of when $[a,b]\hHAM$ is hard in regular graphs. If, instead, we just cared about graphs with constant maximum degree, the hardness results would translate, but easiness would not. We show in this section that in fact the true threshold for hardness in this setting looks very different: complexity is determined not by $b$ but by the ratio $\frac{b}{a}$. 

For our easiness result, we will note that testing $[a,b]$-Hamiltonicity is easy in trees, and then show that when $\frac{b}{a} \leq d-1$, for $a > 1$, any failure of $[a,b]$-Hamiltonicity is witnessed by a tree-like subgraph. To show hardness, we will exploit a few low-degree vertices to construct and fine-tune highly efficient depletor gadgets. 

\subsection{Easiness}\label{sec:maxdegez}

First, we observe that without loss of generality an $[a,b]\hHAM$ walk never traverses a given edge more than $2a$ times:
\begin{lemma}\label{lem:maxlabel}
Let $G$ be a graph. If $G$ has an $[a,b]\hHAM$ labeling, it has such a labeling with maximum label at most $2a$.
\end{lemma}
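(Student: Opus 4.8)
The plan is to get the bound from a minimality argument: take a labeling of smallest possible total weight and show that a too-large label could always be shaved off. Recall from the preliminaries that an $[a,b]\hHAM$ labeling of $G$ is an assignment $\ell$ of nonnegative integers to the edges such that (i) at every vertex the sum of the incident labels is even and lies in $[2a,2b]$, and (ii) the edges $e$ with $\ell(e)\neq 0$ form a connected subgraph. By hypothesis at least one such $\ell$ exists, and since the total weight $\sum_e \ell(e)$ ranges over nonnegative integers I would fix one, call it $\ell$, minimizing that weight. The claim will be that $\max_e \ell(e) \le 2a$ holds automatically.

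Suppose instead some edge $e = uv$ has $\ell(e) \ge 2a+1$. The move is to decrease $\ell(e)$ by exactly $2$, leaving every other label unchanged, and the work is to check this is still a valid labeling — which contradicts minimality. Decreasing by $2$ preserves the parity of the incident-label sums at $u$ and at $v$, so the parity clause of (i) survives; the upper bound $2b$ in (i) survives because the only sums that change (those at $u$ and $v$) decrease; and (ii) survives because $\ell(e) \ge 2a+1 \ge 3$ (using $a \ge 1$), so after the decrement $e$ still carries a positive label and the support is unchanged.

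The one nontrivial check — and where evenness does the work — is the lower bound $2a$ in (i) at $u$ and $v$. Writing $S_u$ for the old incident-label sum at $u$, we have $S_u \ge \ell(e) \ge 2a+1$, and $S_u$ is even, so in fact $S_u \ge 2a+2$; hence the new sum $S_u - 2 \ge 2a$, and likewise at $v$. So the decremented labeling is valid with strictly smaller weight — the desired contradiction — and therefore the minimum-weight labeling already has all labels at most $2a$.

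I don't anticipate a real obstacle here: the argument is a single local exchange wrapped in a minimality (equivalently, termination) observation. The only delicate point, and the one I'd flag, is the interplay between the "decrement by $2$" that the parity condition forces and the lower bound $2a$; it is exactly this that pins the target at $2a$ rather than something smaller, since a vertex meeting an odd label $\ge 2a+1$ has even degree, hence degree $\ge 2a+2$, leaving just enough room, whereas aiming for $2a-1$ could strand a vertex with an odd or too-small incident sum.
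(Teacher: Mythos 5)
Your proof is correct and is essentially the paper's argument: both identify an edge with label exceeding $2a$, observe that evenness of the incident sums forces both endpoints to be visited at least $a+1$ times, and decrement the label by $2$, preserving validity. Wrapping the repeated decrement in a minimum-weight/termination argument is just a standard phrasing of the same local exchange.
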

\begin{proof}
    Take such an $[a,b]\hHAM$ walk, and consider an edge with label more than $2a$. Each of the vertices incident to that edge is therefore visited at least $a+1$ times by the walk --- so, if we subtract $2$ from the label of that edge, each of them will still be visited at least $a$ times, and the labeling will still be valid.
\end{proof}

Now, we give an algorithm for $[a,b]\hHAM$ in trees, which will be an important component of our algorithm for $[a,b]\hHAM$ in general graphs.

\begin{lemma}\label{lem:treeham}
Let $b \geq a \geq 1$, and let $G$ be a tree. There is a linear-time algorithm to determine whether $G$ is $[a,b]\hHAM$. 
\end{lemma}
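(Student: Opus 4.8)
The plan is to recast $[a,b]\hHAM$ on a tree as a constant-width bottom-up dynamic program, using the edge-labeling reformulation from \cref{sec:maxdegez}. I would begin with three structural observations that collapse the search space. First, since every vertex must be visited at least $a \ge 1$ times, every vertex is incident to a positively-labeled edge; the positively-labeled edges must form a connected subgraph, and a connected spanning subgraph of a tree is the whole tree, so in fact \emph{every} edge receives a label of at least $1$. Second, the incident-label sum at each vertex $v$ is then at least $\deg(v)$, so if $G$ has a vertex of degree larger than $2b$ we may immediately answer ``no''; otherwise $G$ has maximum degree at most the constant $2b$. Third, by \cref{lem:maxlabel} we may restrict attention to labelings whose labels all lie in $\{1, \dots, 2a\}$. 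Thus it suffices to decide whether the edges of $G$ admit labels from $\{1,\dots,2a\}$ making the sum of the labels around every vertex even and contained in $[2a,2b]$.

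I would then root $G$ at an arbitrary vertex $r$ and, for each non-root vertex $v$ with parent edge $e_v$, let $S_v \subseteq \{1,\dots,2a\}$ be the set of labels $\ell$ assignable to $e_v$ for which the edges strictly inside the subtree $T_v$ can be labeled from $\{1,\dots,2a\}$ so that every vertex of $T_v$ has an even incident sum in $[2a,2b]$, counting $\ell$ toward the sum at $v$. These sets are computed in post-order. For a leaf, the only incident edge is $e_v$, so the constraint forces $\ell$ to be even and in $[2a,2b]$; together with $\ell \le 2a$ this gives $S_v = \{2a\}$. For an internal vertex $v$ with children $c_1,\dots,c_k$, one first builds the set $P_v$ of attainable values of $m_1 + \cdots + m_k$ with $m_i \in S_{c_i}$ --- accumulated one child at a time, discarding any partial sum exceeding $2b$ since such a sum can never be completed to something in range --- and then sets $S_v = \{\ell \in \{1,\dots,2a\} : \exists\, s \in P_v \text{ with } \ell+s \text{ even and } \ell+s \in [2a,2b]\}$. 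Finally, because $r$ has no parent edge, the tree is $[a,b]\hHAM$ if and only if there is an attainable sum $s$ over the children of $r$ that is even and lies in $[2a,2b]$ (equivalently: attach a phantom parent edge to $r$ whose label is forced to $0$ and ask whether the corresponding set is nonempty). Correctness is a routine induction up from the leaves on the stated meaning of $S_v$, using $T_r = G$; an empty $S_v$ simply propagates upward to a ``no''.

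For the running time, recall $a$ and $b$ are fixed constants, so after the degree check every set appearing in the computation ($S_v$ and every partial version of $P_v$) is a subset of $\{0,1,\dots,2b\}$ and hence has $O(1)$ size; merging one child's set $S_{c_i}$ into the running sum-set costs $O(ab) = O(1)$. Since the children of all vertices number $n-1$ in total and the per-vertex cost of extracting $S_v$ from $P_v$ is $O(1)$, the dynamic program --- together with the initial degree scan and a post-order traversal --- runs in $O(n)$ time.

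The part I expect to demand the most care is the structural setup rather than the recursion: establishing that in a tree every edge is forced to a positive label, that this bounds the vertex degrees (and hence the DP state) by a constant, and that the bound of $2a$ from \cref{lem:maxlabel} is compatible with the lower bound of $1$. These are precisely the facts that turn an a priori unbounded label search into a finite one, and they also justify the ``discard partial sums above $2b$'' shortcut that keeps the state sets small.
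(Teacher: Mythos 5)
Your proof is correct and follows essentially the same strategy as the paper's: root the tree and run a bottom-up dynamic program that records, for each edge, the set of labels extendable to a valid labeling of the subtree below it. The only real difference is bookkeeping --- the paper proves this set is always a contiguous range of even integers and propagates just its endpoints, whereas you carry the set explicitly and keep it constant-sized via the forced positivity of labels in a tree, the $2a$ cap from \cref{lem:maxlabel}, and truncation of partial sums at $2b$; both justifications are sound.
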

\begin{proof}
Arbitrarily root the tree. We will work from the bottom-up, recording for each edge the set of label values that can extend to a valid labeling in the subtree below it. 
In fact, we will show that for each edge $e$ the set of labels for $e$ that can extend to a valid labelling in the subtree below $e$ is of the form $\set{\ell_e, \ell_e+2,\ldots, u_e}$ for even integers $\ell_e, u_e$. Thus, it will suffice to record for each edge $e$ the minimum and maximum labels $\ell_e, u_e$ which can extend to valid labellings in the subtree below $e$.

Start by setting $\ell_e=2a, u_e = 2b$ for each edge $e$ incident to a leaf of the tree: these edges must be traversed at least $2a$ times in order to visit the leaves $a$ times.
Now we repeatedly apply the following process to find $u_e,\ell_e$ for all edges in the tree:

Find a vertex $v$ such that all edges in the subtree rooted at $v$ have been labelled.
Let $L_v,U_v$ be the sum of $\ell_{e'},u_{e'}$ respectively over edges $e'$ going down from $v$.
Let $e$ be the edge going up the tree from $v$.
We claim that the valid labels for $e$ are precisely the even numbers in the interval $[\ell_e, u_e]$ defined by 
$$\ell_e = \max(2, 2a-U_v),\quad u_e = 2b-L_v.$$
First we argue that labelling for $e$ must lie in this set; then we will argue that any labels in this set can be extending to labellings on the subtree.
The label on $e$ must be even because all of the edges going down from $v$ have even labels (by induction). The label on $e$ cannot be zero, or else the labelling is not connected.
The label on $e$ must be at least $2a-U_v$ or else vertex $v$ is not visited enough times.
The label on $e$ cannot be larger than $2b-L_v$ or else vertex $v$ is visited too many times.
It is clear that for any even $k\in [\ell_e, u_e]$, if we give  $e$ label $k$, then there is a compatible labelling of the subtree rooted at $v$. 

If at any step in the process an edge $e$ is given $\ell_e > u_e$, then we deduce that there is no valid labelling of the tree.
Otherwise, the process will terminate with every edge having $\ell_e, u_e$ with $\ell_e\le u_e$.
To finish our check of whether the tree is $[a,b]\hHAM$ we simply iterate over all possible ways of assigning labels to the edges incident to the root, taken from their corresponding allowed sets $[\ell_e, u_e]\cap 2\Z$. 
If there is any combination of labels of the edges incident to the root that satisfies the root, then by construction of the allowed label sets, we can extend the labellings given to each $e$ incident to the root to the entire subtree below $e$. Thus, we can label the entire tree.
\end{proof}

We now use this procedure for trees as part of an algorithm for $[a,b]\hHAM$ in general graphs in the case $a>1$. Afterwards we will analyze the case $a = 1$, which is simple but (somewhat surprisingly) meaningfully different from the case $a>1$.

\begin{lemma}\label{lem:a>1oof}
For any $d$ and any $b \geq a > 1$ such that $\frac{b}{a} \geq (d-1)$, there exists a polynomial-time algorithm to test whether a graph $G$ of max degree $d$ is $[a,b]\hHAM$.
\end{lemma}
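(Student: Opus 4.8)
The plan is to reduce to a dynamic program on a tree derived from $G$, extending the tree algorithm of \cref{lem:treeham}. If $G$ is disconnected we answer NO (a closed walk cannot visit all vertices). Otherwise, delete every bridge of $G$; this splits $G$ into $2$-edge-connected components — allowing isolated vertices as degenerate components — and these components together with the bridges form a tree $\mathcal{T}$, the ``bridge tree'' of $G$. Root $\mathcal{T}$ and process it bottom-up, recording for each bridge $e$ (an edge of $\mathcal{T}$) the set of labels of $e$ that extend to a valid labelling of the part of $G$ hanging below $e$; as in \cref{lem:treeham} I would show this set is always an arithmetic progression $\{\ell_e,\ell_e+2,\dots,u_e\}$, so it suffices to track $\ell_e$ and $u_e$. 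When the component processed is a single vertex, the update is exactly the one in \cref{lem:treeham}. The new ingredient is processing a component that is a genuine $2$-edge-connected graph $B$, and the crux of the argument is a \emph{flexibility lemma}: when $a>1$ and $b\ge a(d-1)$, a $2$-edge-connected graph $B$ of max degree $\le d$ is so flexible that the feasible label set for the bridge above it is always a wide, efficiently computable arithmetic progression.

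To prove the flexibility lemma I would fix an ear decomposition of $B$ (a cycle $C_0$, then ears that are paths or cycles, each attached to the current graph and introducing new vertices) and build a closed walk on $B$ by traversing $C_0$ exactly $a$ times and traversing each ear with roughly $\lceil a/2\rceil$ round trips (and at least $a$ round trips for cycle-ears). Each vertex lies on $C_0$ or is introduced by one ear and can be an endpoint of at most $\deg_B(v)-2$ later ears; a short count shows every vertex is then visited between $a$ and $\deg_B(v)\lceil a/2\rceil\le d\lceil a/2\rceil$ times. The inequality $d\lceil a/2\rceil\le a(d-1)=b$ — which holds precisely because $a\ge2$ and $d\ge3$, and fails when $a=1$ — is exactly what makes this a valid $[a,b]\hHAM$ traversal of $B$ on its own, and this is where the hypothesis and the special role of $a=1$ come from. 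Furthermore, any vertex $v$ of $B$ incident to a bridge has $\deg_B(v)\le d-1$, so $B$'s traversal uses it at most $(d-1)\lceil a/2\rceil$ times, leaving a margin of at least $(d-1)\lfloor a/2\rfloor$ visits — plus the slack $b-a(d-1)$ — to absorb the demand the bridge places on $v$; pushing extra round trips onto ears through $v$ lets us tune its visit count further upward. Working this through should show that, given the demands the DP has already computed at the other bridges of $B$, the feasible label set at the bridge above $B$ is an arithmetic progression computable in polynomial time.

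Assembling the pieces: the DP has polynomially many states (two integers per bridge), each component is touched once, and both the single-vertex update (\cref{lem:treeham}) and the $2$-edge-connected update (the ear argument) are polynomial; at the root we test whether compatible labels can be chosen across its incident bridges, exactly as in \cref{lem:treeham}. Soundness is clear because the restriction of any $[a,b]\hHAM$ labelling to a component and its bridges is one of the label combinations the DP considers. For completeness, from an accepting run we glue the per-component traversals furnished by the ear argument to the per-vertex labels furnished by the tree DP along the bridges, obtaining a connected labelling with all vertex-sums even and in $[2a,2b]$, and take an Euler tour.

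The step I expect to be the main obstacle is the flexibility lemma: (i) making the ear-traversal bookkeeping tight enough that $b\ge a(d-1)$ — rather than the slacker $b\ge ad$ — suffices, handling the interaction between a vertex introduced by one ear and its later incidences; and (ii) computing, given the demands from below, the feasible interval for the bridge above $B$, which requires controlling how a traversal of $B$ distributes visits among $B$'s (possibly many) bridge endpoints simultaneously. The integrality requirements (labels even, round-trip counts integral) and the small cases $a=2$ and $d=3$ will need to be checked directly.
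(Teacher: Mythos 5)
Your high-level plan shares the paper's starting point — trees (equivalently, bridges) are the only real obstruction, and $2$-edge-connected pieces are flexible when $a>1$ and $b\ge a(d-1)$ — but the step you yourself flag as the crux, the ``flexibility lemma,'' is not established, and the route you propose for it provably breaks in the tight case. Traversing each ear with round trips forces every ear edge to carry an even label $2r$, so an internal vertex of an ear receives $2\cdot 2r$ label, i.e.\ at least $2\lceil a/2\rceil = a+1$ visits when $a$ is odd. Take $d=3$, $a$ odd, $b=a(d-1)=2a$, and a theta graph whose middle path has an internal vertex $w$ carrying a pendant leaf: the pendant edge must be labeled at least $2a$, so $w$ can afford only $2a$ more label from its two ear edges, which no positive round-trip count achieves while also giving the other internal ear vertices their $a$ visits. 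The missing idea is to label such paths with \emph{alternating} $a+1$ and $a-1$ (all $a$ if $a$ is even): this gives every internal path vertex label sum exactly $2a$, keeps endpoint contributions even, and keeps all labels positive precisely because $a>1$ — which is also where the special role of $a=1$ actually enters, rather than through the inequality $d\lceil a/2\rceil\le a(d-1)$ you cite.

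The second unresolved piece, your item (ii), is the DP update at a genuine $2$-edge-connected component $B$ with several bridge demands: you would need to show the simultaneously feasible label set at the parent bridge is a computable progression, which is a nontrivial multi-terminal feasibility question. The paper avoids this entirely: it shows that \emph{whatever} labels the tree derivations assign to the bridges — after capping each at $2a$ via \cref{lem:maxlabel} — the cyclic part can always be completed (peel off cycles labeled $a$, then paths between already-satisfied vertices labeled alternately $a\pm1$, then label everything else $2$), giving each cycle-vertex at least $2a$ and at most $2a+(d-2)2a=(d-1)2a\le 2b$ label. So no dynamic program over the bridge tree is needed; the algorithm is just ``run the tree derivations on the bridge forest and accept iff they all succeed.'' To salvage your proposal you would need both the alternating-label idea and either a proof of the multi-bridge feasibility computation or the observation that it is unnecessary.
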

\begin{proof}

The first step of the algorithm is to search for \defn{tree derivations}. Call an edge $e \in E(G)$ \defn{dangerous} if it does not belong to any cycle in $G$ (the set of dangerous edges can be compute easily using connectivity tests on modified versions of the graph). The tree derivation stage will assign labels to all dangerous edges. 

The means of performing a derivation is exactly as described in \cref{lem:treeham}. The set of dangerous edges forms a forest, so for each tree in the forest we run the algorithm from \cref{lem:treeham} to find an $[a,b]\hHAM$ labeling that satisfies all vertices except possibly the root. If any of these tree derivations fail, $G$ cannot be $[a,b]\hHAM$ --- the failure of a tree derivation gives a set of vertices that cannot be mutually satisfied under any labeling, since there can't be any more edges into this set.

\begin{figure}[htb]
    \centering
    \includegraphics[width=.55\linewidth]{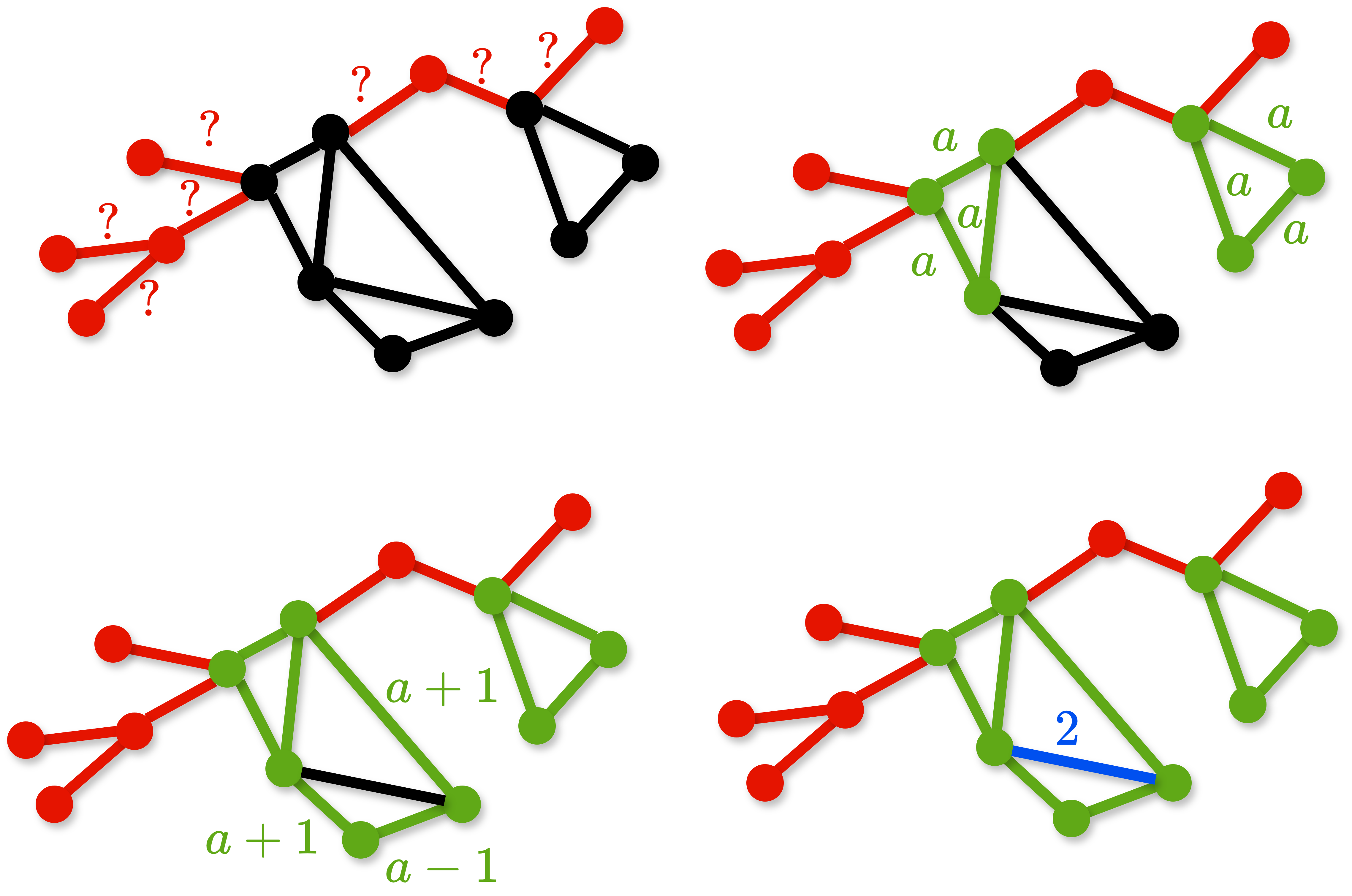}
    \caption{The algorithm performs all tree derivations, then labels disjoint cycles, then merges paths into those cycles, and finally labels all remaining edges with $2$.}
    \label{fig:enter-label}
\end{figure}

If none of these tree derivations fail, we have a labeling of all dangerous edges that satisfies all vertices not involved in any cycle. Call a vertex \defn{good} if it's involved in at least one cycle. The next step of our argument is to show that, no matter what labeling we choose for the dangerous edges, there is always an assignment to the remaining edges that visits each good vertex between $a$ and $(d-1)a \leq b$ times, showing that $G$ is necessarily $[a,b]\hHAM$ unless a tree derivation fails. 

\paragraph{Existence of a labelling, given that tree derivations succeed}
We now give a procedure which finds a labelling of the remainder of the graph once all tree derivations have succeeded; we stress that we don't actually run this procedure in our algorithm, it is an analysis tool to demonstrate the existence of a labelling.
The procedure is as follows:

\begin{itemize}
    \item Find an arbitrary cycle in the graph, and label all of its edges with $a$. Mark the vertices involved in that cycle \defn{happy}.
    \item While there remains a cycle involving only unhappy vertices, repeat the above step with that cycle.
    \item Now, suppose there still exists an unhappy good vertex. Since good vertices are involved in cycles, there must exist a path that starts and ends in happy vertices but otherwise contains only unhappy good vertices. (Here, we're allowing that ``path'' to potentially start and end at the same happy vertex --- this is a slight abuse of terminology.) Find such a path. If $a$ is even, label all edges in the path with $a$. Otherwise, give the edges in the path alternating labels of $a+1$ and $a-1$. Mark all vertices along the path happy.
    \item Repeat the above step until all good vertices are happy. Then, label all unlabeled edges of the graph with $2$.
\end{itemize}

Now we analyze the procedure. First, note that all edges are assigned nonzero
weight, so this construction is connected\footnote{This is the place where the
argument fails for $a = 1$. In that case, the edges assigned $a-1$ would be
given zero weight, and so we wouldn't ensure connectivity.}. Labeling a full
cycle with $a$'s increases every vertex label sum by either $0$ or $2a$, and all
other labels used in this construction are even values, so this labeling gives
even sums into every vertex. So, the only point to argue is that every vertex is
given sum between $2a$ and $2b$. This holds for the non-good vertices by virtue
of the tree derivations succeeding. Now, note that for each good vertex, at some
point we label two of its incident edges either $a$ and $a$ or $a+1$ and $a-1$.
In either case, this ensures that they have total weight at least $2a$. By
\cref{lem:maxlabel}, we can decrease all labels assigned while performing tree
derivations to be at most $2a$. Thus, the sum of the labels incident to any
good vertex is at most $$2a+(d-2)2a  =(d-1)2a\le 2b.$$
\end{proof}

The final point to note in this section is the following, covering the $a=1$ case:
\begin{lemma}\label{lem:a=1lol}
    Every connected graph of maximum degree $d$ is $[1,d]\hHAM$
\end{lemma}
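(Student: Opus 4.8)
The plan is to exhibit an $[1,d]\hHAM$ walk directly via a spanning tree, using the labelling perspective from the preliminaries. If $G$ has a single vertex the claim is trivial (the empty walk, or note the statement is vacuous), so assume $G$ has at least two vertices. First I would fix an arbitrary spanning tree $T$ of $G$; this exists since $G$ is connected. Then I would consider the edge labelling that assigns label $2$ to every edge of $T$ and label $0$ to every edge not in $T$ — equivalently, the multigraph $2\cdot T$ obtained by doubling each tree edge.

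Next I would verify that this labelling satisfies the three required properties. Connectivity of the support is immediate, since the support is exactly $T$, which spans $G$. Every vertex sum is even, since each incident label is either $0$ or $2$. Finally, the sum of labels around a vertex $v$ equals $2\deg_T(v)$, so I need $1 \le \deg_T(v) \le d$ for all $v$: the lower bound holds because $T$ is a spanning tree of a graph on $\ge 2$ vertices, so no vertex is isolated in $T$; the upper bound holds because $\deg_T(v) \le \deg_G(v) \le d$. Hence the labelling corresponds (via an Euler tour of $2\cdot T$, exactly as in the ``Introducing Labellings'' discussion) to a closed walk visiting each vertex $\deg_T(v) \in [1,d]$ times, so $G$ is $[1,d]\hHAM$.

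I do not anticipate a genuine obstacle here; the only point that deserves an explicit sentence is that the tree-degree of a vertex never exceeds its degree in $G$, which is what pins the visit count below $d$. It is also worth remarking (tying back to the footnote in \cref{lem:a>1oof}) that this argument is exactly why $a=1$ behaves differently: doubling a spanning tree uses label $0$ on non-tree edges, which is harmless for connectivity precisely because the tree alone already spans, whereas for $a>1$ one cannot afford to leave so many edges unused without violating the lower visit bound on high-degree vertices.
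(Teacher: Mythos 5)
Your proof is correct and takes essentially the same approach as the paper's, which is even more direct: it simply labels \emph{every} edge of $G$ with $2$, so each vertex is visited exactly $\deg_G(v)\in[1,d]$ times and connectivity of the support is that of $G$ itself, with no need to pass to a spanning tree. Your spanning-tree variant is a harmless refinement, and your closing remark about why this fails for $a>1$ matches the footnote in the paper.
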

\begin{proof}
    Label all edges $2$. This visits each vertex at least once, and at most $d$ times.
\end{proof}

Together, \cref{lem:a>1oof} and \cref{lem:a=1lol} give our easiness results. Note that we have easiness for $[a,b]\hHAM$ whenever $\frac{b}{a} \geq d-1$, unless $a = 1$, in which case we need $\frac{b}{a} \geq d$. The first thing we will observe in the next section is that $[1, d-1]\hHAM$ is in fact $\NP$-hard, so this distinction is necessary.

\subsection{Hardness}
\begin{lemma}\label{lem:a=1hard}
    $[1,d-1]\hHAM$ is $\NP$-hard in graphs of maximum degree $d$.
\end{lemma}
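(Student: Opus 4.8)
The plan is to reduce from Hamiltonicity in $3$-regular graphs, which is $\NP$-hard by \cite{GJ3}. Given such a (connected) graph $G$, I would build $G'$ of maximum degree $d$ as follows: subdivide each edge $e=(u,v)$ of $G$ once, so it becomes a path $u - m_e - v$, and then attach $d-2$ pendant vertices to each new subdivision vertex $m_e$. Every original vertex still has degree $3 \le d$, every $m_e$ has degree $2+(d-2)=d$, and every pendant has degree $1$, so $G'$ has maximum degree $d$. The goal is to show that $G'$ is $[1,d-1]\hHAM$ if and only if $G$ is Hamiltonian, reasoning throughout in terms of edge labellings rather than walks directly.

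The heart of the argument is the ``only if'' direction. Starting from an $[a,b]\hHAM$ labelling of $G'$ with $a=1$, $b=d-1$, by \cref{lem:maxlabel} I may assume every label is at most $2a=2$. Each pendant edge is then forced to have label exactly $2$ --- at least $2$ so the pendant is visited, at most $2$ by the assumption. Hence each $m_e$ already collects $2(d-2)$ from its pendants, so if $s_e$ denotes the sum of the labels on its two ``through'' edges $u m_e$ and $m_e v$, then $s_e$ is even and $s_e \le 2(d-1)-2(d-2)=2$, and $s_e\neq 0$ (otherwise $m_e$ and its pendants would form a component disjoint from $V(G)$). So $s_e=2$, meaning the through-edge labels are $(1,1)$, $(2,0)$, or $(0,2)$: every original edge $e$ is either \emph{on} (used straight through, contributing label $1$ to each endpoint) or \emph{dead-ended} at one endpoint (label $2$ there, $0$ at the other). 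Next, for an original vertex $u$ the incident-label sum equals $k_u + 2p_u$, where $k_u$ is the number of incident on-edges and $p_u$ the number dead-ended at $u$; evenness forces $k_u$ to be even, and since $\deg_G(u)=3$ this gives $k_u\in\{0,2\}$. Finally, let $H\subseteq E(G)$ be the set of on-edges. Any path in the positive-label subgraph of $G'$ joining two original vertices must pass straight through every subdivision vertex it meets (it cannot exit through a pendant leaf), hence uses only on-edges, so connectivity of $G'$ forces $H$ to be a connected spanning subgraph of $G$. A connected spanning subgraph with all degrees in $\{0,2\}$ has no isolated vertex and is therefore a single cycle through all of $V(G)$ --- a Hamiltonian cycle. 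For the ``if'' direction I would just exhibit the matching labelling: given a Hamiltonian cycle $C$, label the through-edges of $m_e$ by $(1,1)$ for $e\in C$ and by $(2,0)$ (putting the $2$ at an arbitrarily chosen endpoint of $e$) for $e\notin C$, and every pendant edge by $2$; a direct check shows all incident-label sums are even and lie in $[2,2(d-1)]$ and that the positive-label subgraph is connected.

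The step I expect to be the main obstacle --- and the reason the construction subdivides edges and pads the subdivision vertices, rather than simply hanging gadgets off the original vertices --- is making the very small slack between $b=d-1$ and the trivial ``all labels $2$'' solution (which visits a degree-$d$ vertex $d$ times) actually force the parity condition $k_u\in\{0,2\}$. Padding $m_e$ with exactly $d-2$ pendants pins $s_e$ to $2$, which is what makes an edge contribute an \emph{odd} amount to each endpoint precisely when it is on; this parity is the whole reason ``connected spanning'' can be upgraded to ``Hamiltonian cycle.'' The other point that needs care is that dead-ended edges must not disconnect the graph, which is why they are labelled $(2,0)$ rather than $(0,0)$: the $2$-labelled half keeps $m_e$ and its pendants attached to a vertex of $G$.
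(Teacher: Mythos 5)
Your proof is correct. It follows the same overall strategy as the paper --- reduce from Hamiltonicity in $3$-regular graphs, subdivide each edge, and attach enough ``depletion'' to the subdivision vertices that each edge gadget is forced into either an ``on'' state (odd label at both endpoints) or a ``dead-ended'' state (label $2$ at one endpoint, $0$ at the other), after which parity forces an even number of on-edges at each original vertex and connectivity upgrades the resulting $2$-factor to a Hamiltonian cycle. The difference is in the implementation: the paper proves this lemma by citing its regular-graph reduction (\cref{lem:oddreghard}), which trisects each edge, glues $d-2$ of the \cref{lem:odddepletor}-style depletors to each of the two intermediate nodes, and also glues $d-3$ depletors to each original vertex; for even $d$ it swaps in $3$-regular depletors. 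Your construction exploits the fact that only a max-degree bound (not regularity) is required, so pendant leaves suffice as depletors, a single subdivision vertex per edge suffices, and no gadgets are needed on the original vertices at all --- the budget $2(d-1)-2(d-2)=2$ at each subdivision vertex does all the work. This makes your argument self-contained (it does not rely on the depletor lemmas) and arguably cleaner; you are also more explicit than the paper about why a vertex with zero on-edges is impossible (its dead-ended component would disconnect the positive-label subgraph) and about why \cref{lem:maxlabel} pins the pendant labels to exactly $2$. Both directions of your equivalence check out, including the connectivity verification in the ``if'' direction.
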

\begin{proof}
    When $d$ is odd, this follows by our hardness results in \cref{lem:oddreghard} --- a $d$-regular graph is in particular a graph of maximum degree $d$. When $d$ is even, we can use the exact same construction as in the proof of that lemma, but just use 3-regular depletors instead of $d$-regular depletors, since we're no longer concerned with regularity.
\end{proof}

When $a > 1$, though, we will show hardness for much larger values of $b$ than
are hard in $d$-regular graphs. Proving this claim will require a new depletor
construction, which we call a \defn{chain depletor}, since it's composed of a
long chain of subunits:

\begin{lemma}\label{lem:chains}
Let $a,b$ be integers with $1\le a\le b < a(d-1)$.
For any integer $2k\in [2,2a]$, there exists a gadget of maximum degree $d$ and with a single attachment edge, which can be part of an $[a,b]\hHAM$ cycle if the attachment edge is labeled $2k$, but cannot be satisfied if the attachment edge is labeled $< 2k$.
\end{lemma}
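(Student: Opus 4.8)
The plan is to build the ``chain depletor'' as a sequence of identical small subunits wired together so that the amount of ``flow'' the attachment edge must carry is forced to be at least $2k$, while an honest traversal can achieve exactly $2k$. The basic idea is that a single subunit should have the property that whatever label $2j$ enters it from above, it can dispose of a bounded amount of that flow locally (visiting its own internal vertices between $a$ and $b$ times), but it needs \emph{some} flow to stay connected; by stacking many such subunits, the deficit accumulates so that the top attachment edge is forced up to $2k$. Concretely, I would design a subunit as a constant-size graph, of max degree $d$, with one ``top'' port and one ``bottom'' port, such that: (i) if $2j$ enters the top, a valid internal labelling exists iff $2j' \le 2j$ leaves the bottom where $j' \ge j - c$ for some fixed constant $c = c(a,b,d) \ge 1$ (so each subunit can ``absorb'' at most a fixed amount), and (ii) the bottom port of the last subunit must carry label $\ge 2$ (connectivity), or alternatively we cap the chain with a small terminator gadget (e.g.\ a depletor-like piece or just a vertex of degree $< d$) that forces its incoming edge to be at least $2$. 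Then with roughly $k/c$ subunits the top attachment edge is forced to be $\ge 2k$; and to hit exactly $2k$ we include a small constant number of extra low-degree ``tuning'' vertices (this is exactly what the paper hints at: ``exploit a few low-degree vertices to construct and fine-tune highly efficient depletor gadgets''), which lets us shave the forced value down to precisely the target $2k$ regardless of divisibility.

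The key steps, in order: (1) Construct the atomic subunit and prove its input/output relation — i.e.\ characterize, via the labelling viewpoint and a small case analysis, exactly which pairs (top label, bottom label) admit an internal labelling satisfying all internal vertices with visit counts in $[a,b]$; the crucial quantitative point is that the condition $b < a(d-1)$ is what limits how much flow one subunit can absorb, since an internal vertex of degree $d$ fed by flow $2j$ must send out at least $2j - 2b \cdot (\text{something})$ and at most enough to keep it $\le b$. (2) Chain together $m$ copies, obtaining a gadget whose single remaining free port (the top) must be labelled $\ge 2\lceil k \rceil$-ish, and show that the connectivity requirement (no zero labels on the path down the chain, or a terminator forcing the bottom) is what prevents the flow from ``leaking to zero.'' (3) Add a constant-size tuning stage near the top so the forced minimum is \emph{exactly} $2k$ for each target $2k \in [2, 2a]$, using vertices of degree strictly less than $d$ (allowed since we only need max degree $\le d$) to introduce odd/small corrections. (4) Verify the ``positive'' direction: when the attachment edge is labelled exactly $2k$, exhibit an explicit internal labelling (analogous to the explicit loop labellings in \cref{lem:evendepletor} and \cref{lem:odddepletor}) that satisfies everything; and verify the ``negative'' direction: if the attachment edge is labelled $2k' < 2k$, a counting argument over the chain shows some internal vertex is visited fewer than $a$ times or the gadget disconnects. (5) Check the max-degree bound holds throughout, invoking \cref{lem:maxlabel} to assume all internal labels are at most $2a$ so the counting stays clean.

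I expect the main obstacle to be step (1) together with getting the \emph{exact} threshold $2k$ rather than merely ``$\ge 2k$'': it is easy to force the attachment edge to be large, but forcing it to be \emph{precisely} a prescribed even value in $[2,2a]$ — and doing so for every such value with a single parametrized family of gadgets of bounded degree $d$ — requires the tuning stage to be designed carefully, since the natural chain construction only gives coarse (multiples-of-$2c$) control. The interplay between the connectivity constraint and the visit-count constraints in the subunit is where the real combinatorial content lies; in particular I anticipate needing the hypothesis $b < a(d-1)$ in a tight way (if $b \ge a(d-1)$ the subunit could absorb arbitrary flow and no chain would force anything, consistent with the easiness result of \cref{lem:a>1oof}). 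A secondary subtlety is ensuring the parity works out — all forced sums into internal vertices are even, and the $a$ odd vs.\ $a$ even distinction (which already appeared in \cref{lem:a>1oof}) may force two slightly different subunit designs.
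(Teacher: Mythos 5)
Your high-level architecture matches the paper's: a chain of constant-size links, each of which relates its incoming label to its outgoing label via the visit-count constraints on two internal vertices, terminated by a small gadget that demands a minimum label at the bottom, so that a deficit accumulated along the chain forces the attachment edge up to $2k$. The role you assign to $b < a(d-1)$ (it is what lets a single link carry enough glued depletion to force a strict decrease while staying within degree $d$) is also essentially right. However, there is a genuine gap: everything you defer to the ``tuning stage'' is precisely where the content of the proof lies, and you have not supplied it. You correctly observe that a naive chain of identical subunits only controls the forced value up to multiples of some step size $2c$, but you give no construction achieving an \emph{exact} target $2a-2k$ for every even value in $[2,2a]$. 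The paper resolves this by building three link types whose net increments are $L-2a$, $L-2$, and $L$ where $L = 2b \bmod 2a$ (realized by gluing $M$ or $M+1$ single edges, with $M=\lfloor b/a\rfloor - 1$, plus optionally one $3$-regular depletor), observing that $\gcd(L, L-2, L-2a)=2$ when $a\nmid b$ so a non-negative integer combination hits any even target, and then giving a greedy \emph{ordering} argument so that every intermediate edge label stays in the window $[2,2a-2]$ where the link analysis is valid. Your proposal never addresses the ordering issue at all, and it matters: a valid multiset of increments in the wrong order can push an intermediate label out of range, breaking either satisfiability or connectivity mid-chain.

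Two further omissions: (i) the case $a \mid b$, where $L=0$ and all available increments are non-positive, requires a genuinely different chain (the paper uses $k-1$ links of increment $-2$ terminated by a $3$-regular depletor demanding label $\ge 2$, rather than a single vertex demanding label $\ge 2a$) --- this is close to the picture you describe, but you present it as the generic case rather than the exceptional one; and (ii) there is a bootstrapping issue you gloss over --- the only primitive depletions available before the chain construction exists are $2$ (a triangle or the depletor of \cref{lem:odddepletor}) and $2a$ (a glued pendant edge), so the links themselves must be built from combinations of exactly these, which is what dictates the three achievable increments above. Without these pieces the ``negative direction'' of your step (4) cannot be carried out, since you cannot yet certify that the forced threshold is exactly $2k$ rather than some nearby value.
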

\begin{proof}
    Consider the gadget shown in \cref{fig:chainlink} --- a $3$-edge path on two intermediate vertices, with depletors of total depletion $D$ attached to the second intermediate vertex. 
    
    \begin{figure}[htb]
        \centering
        \includegraphics[width=0.22\linewidth]{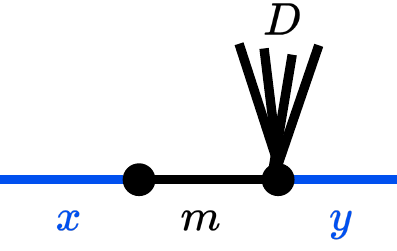}
        \caption{A ``chainlink'' gadget, forcing $y \leq x + (2b - 2a - D)$}
        \label{fig:chainlink}
    \end{figure}

    Gadgets of this form will serve as the links in the chain. Let $x$ be the label leaving the first intermediate vertex, $y$ be the label leaving the second intermediate vertex, and $m$ be the label on the edge between the two vertices. In order to satisfy the visit counts of the first vertex, we must have $x + m \geq 2a$, and in order to satisfy the visit counts of the second vertex we must have $y + m + D \leq 2b$. Adding these constraints together, we find that if $y > x + (2b - 2a - D)$, there is no setting of $m$ that satisfies both of them. On the other hand, if $x \in [2, 2a - 2]$, then setting $m = 2a - x$ and $y = x + (2b - 2a - D)$ not only satisfies all visit count constraints, but also preserves connectivity \emph{so long as $y\ge 2$}. So, this gadget forces its ``outgoing'' edge to have label at least $(2b - 2a - D)$ more than its ``incoming'' edge, and is satisfied by an outgoing label exactly that large as long as its incoming edge had label between $2$ and $2a-2$. 
    
    Of course, in constructing this link gadget, we assumed we were able to deplete a vertex by $D$. So if we want to use these gadgets for anything, we first need some sort of depletors to start with. Fortunately we do have some depletors: using for instance the $3$-regular depletor of \cref{lem:odddepletor} (or an edge with a triangle on the end) gives $2$ depletion, and gluing a single vertex along an edge gives $2a$ depletion. The overall goal of the chain depletor construction is to get a single gadget with depletion in between $2$ and $2a$, since for our reductions we will need to fine-tune the depletion exactly. 

    For our construction, we chain together multiple copies of the link gadget, identifying the outgoing edge of one with the incoming edge of the next, and finally ending in a single vertex, as shown in \cref{fig:chain}:
    
    \begin{figure}[htb]
        \centering
        \includegraphics[width=0.5\linewidth]{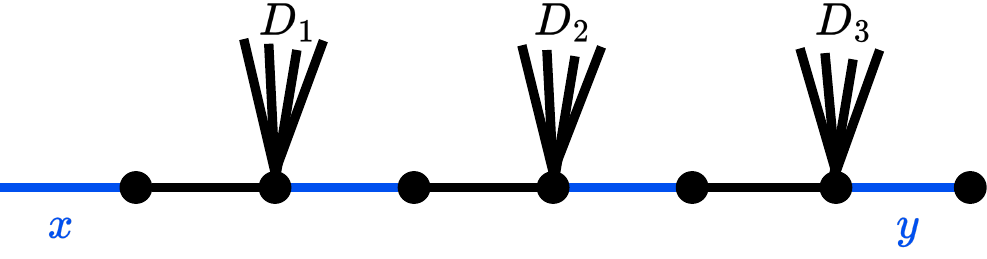}
        \caption{A chain of link gadgets forcing \ $x + \sum_{i=1}^\ell (2b - 2a - D_i) \geq y \geq 2a$.}
        \label{fig:chain}
    \end{figure}

    Suppose the link gadgets are constructed with depletions $D_1, \dots, D_\ell$, and let $x$ and $y$ be the labels assigned to the attachment edge and final edge of the chain, respectively. By our analysis of the link gadget, this construction is unsatisfiable if the final edge has label $y > x + \sum_{i=1}^\ell (2b - 2a - D_i)$. We would therefore like to choose the depletions $D_i$ such that $\sum_{i=1}^\ell (2b - 2a - D_i) = 2a - 2k$, since then if the incoming edge has too small a label $x < 2k$, this will make setting $y \geq 2a$ impossible, and so ensure that the vertex at the end of the chain can't be satisfied. We also want to make sure that the chain \emph{is} satisfiable if the incoming edge is labeled $x = 2k$; to guarantee that, it suffices by our analysis of the link gadgets to ensure that $2k + \sum_{i=1}^j (2b - 2a - D_i) \in [2, 2a - 2]$ for all $j < \ell$, since then the construction is satisfied by greedily labeling each edge, from attachment edge to the final edge, with the smallest allowable value. 

    We will now show that, except in the case where $b$ is a multiple of $a$, which we will handle slightly differently, it is always possible to construct such a sequence of $D_i$. Let $M = \left\lfloor \frac{b}{a} - 1\right\rfloor$ and $L = 2b \mod 2a$, so that $2b = 2a(M + 1) + L$. We will use three kinds of links, with the following depletions:
    \newcommand{\larg}{\text{large}}
    \newcommand{\medi}{\text{medium}}
    \newcommand{\smal}{\text{small}}
    \begin{itemize}
        \item $D_{\larg} = 2a(M+1)$, created by gluing $(M+1)$ single edges. We have 
        \[(2b - 2a - D_{\larg}) = L - 2a.\]
        \item $D_{\medi} = 2aM + 2$, created by gluing $M$ single edges and one 3-regular depletor. We have 
        \[(2b - 2a - D_{\medi}) = L - 2.\]
        \item  $D_{\smal} = 2aM$, created by gluing $M$ single edges. We have 
        \[(2b - 2a - D_{\smal}) = L.\]
    \end{itemize}
    To verify that these links are well defined, we must check that the maximum degree in the link is at most $d$.
   Note that $\frac{b}{a} < d-1$ by assumption.
   Thus, $M \leq d-3$, so the degree constraint allows us to glue $M+1$ depletors to a vertex in our link gadget.

   As long as $b$ is not a multiple of $a$, we have that $L$ is positive, $L -
   2a$ is negative, and $\gcd(L, L-2, L-2a) = 2$. By Bezout's lemma, there must
   therefore exist non-negative integers $w_1, w_2, w_3$ such that $$w_1(L - 2a)
   + w_2(L-2) + w_3L = 2a - 2k.$$ So, we have constructed a multiset of links with
   the desired sum. Now we give an ordering of the links in the multiset such that
   such that $$2k + \sum_{i=1}^j (2b - 2a - D_i) \in [2, 2a - 2]$$ for all $j <
   \ell$. We construct the ordering greedily. Let $S$ be the
   multiset of links we found above. While $$2k + \sum_{i=1}^j (2b - 2a - D_i)
   \neq 2a,$$ we choose an element $D^*$ of $S$ such that $$(2b - 2a - D^*) +
   2k + \sum_{i=1}^j (2b - 2a - D_i) \in [2, 2a],$$ set $D_{j+1} = D^*$, and
   remove $D^*$ from $S$. 
   We now show that such a $D^*$ necessarily exists. 
   
   Suppose $S$ is nonempty, and adding a large link brings the sum outside the desired interval. That is, 
   \[(L-2a) + 2k + \sum_{i=1}^j (2b - 2a - D_i) \leq 0.\]
   Then, since adding all remaining elements of $S$ would bring the sum to $2a$, we know that $S$ cannot consist only of large links. $L$ and $L-2$ are both non-negative, and we know
   \[L + 2k + \sum_{i=1}^j (2b - 2a - D_i) \leq 0 + 2a = 2a,\]
   so adding either a medium or a small link would be valid. If, on the other hand, adding a large link would be valid, then either $S$ contains a large link and we can add that, or $S$ contains only small and medium links, in which case since $L$ and $L-2$ are both non-negative we can add them in any order. Repeating this argument on $S$ until we reach a sum of $2a$ will give a greedy construction for the link sequence $D^*$.\\

    We have established the lemma in the case that $a \nmid b$; we now deal with the case where $b$ is a multiple of $a$. Here, $L = 0$, so now none of $L$, $L-2$, and $L-2a$ are positive. In this case, we will build our chain using $k-1$ copies of the $D_{\medi}$ link, and feed the final edge into a $3$-regular depletor as opposed to a single vertex, as in \cref{fig:chain2}. 
    \begin{figure}[htb]
        \centering
        \includegraphics[width=0.55\linewidth]{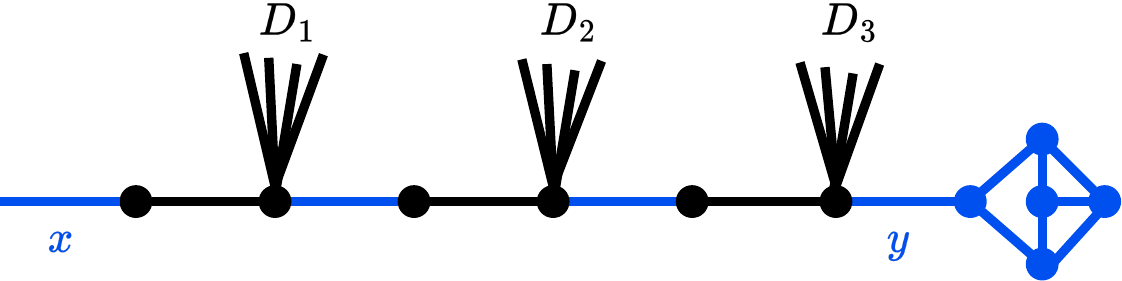}
        \caption{A chain of link gadgets forcing\ $x - 2(k-1) \geq y \geq 2$.}
        \label{fig:chain2}
    \end{figure}
    
    If the attachment edge is labeled $2k$, then the final edge can be labeled at least $2$ and so this is satisfiable. However if the attachment edge is labeled $x < 2k$, then the final edge can't have positive label, and so the final depletor is not visited.
\end{proof}

Now that we have these fine-tunable depletors, we can glue them to the vertices of a graph to contribute a specific number of artificial visits. However, in order to contribute \emph{enough} visits in our reduction, it is sometimes necessary to also deplete along the edges. 

\begin{lemma}\label{lem:maxdeg-edge-gadget}
Let $a,b$ be constants with $1\le a\le b < a(d-1)$.
For any integer $2k\in [2, 2a-2]$, there exists a gadget of maximum degree $d$ and with two attachment edges with the following properties:
\begin{itemize}
\item The gadget can be satisfied if both attachment edges are labeled $2k$.
\item The gadget cannot be satisfied if an attachment edge is labeled less than $2k$.
\item  The gadget can be both satisfied and connected if both attachment edges are labeled $2k + 1$.
\item The gadget cannot be connected if an attachment edge is labeled less than $2k + 1$.
\end{itemize}
\end{lemma}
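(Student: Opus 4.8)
The plan is to glue together two copies of the chain depletor of \cref{lem:chains} with a single connecting \emph{bridge} edge, arranging things so that (i) each attachment edge is forced to carry label at least $2k$, (ii) the bridge inherits the common parity of the two attachment labels, and (iii) the bridge is used --- equivalently the gadget is internally connected --- exactly when that parity is odd, which, since $2k$ is even, is exactly when an attachment label exceeds $2k$. Concretely I would build two chains $C_1,C_2$ as in \cref{lem:chains}, with $C_i$ carrying attachment edge $e_i$ and with link depletions chosen so that $\sum_j(2b-2a-D_j)=2a-2k$; the hypothesis $b<a(d-1)$ is precisely what lets the links of each chain fit inside max degree $d$, exactly as there. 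I would then add one further edge $g$ joining a carefully chosen vertex of $C_1$ to a carefully chosen vertex of $C_2$, and take the whole thing --- with only $e_1$ and $e_2$ reaching outside --- as the gadget.

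The crucial observation is a parity count: since $C_1$ meets the outside world only through the two edges $e_1$ and $g$, summing the (necessarily even) degree-sums over all vertices of $C_1$ forces $\mathrm{label}(e_1)+\mathrm{label}(g)$ to be even, and symmetrically for $C_2$. Hence $\mathrm{label}(e_1)\equiv\mathrm{label}(e_2)\pmod 2$, and the bridge label $z=\mathrm{label}(g)$ has the same parity as both. So when the attachment labels are even the bridge may carry label $0$, in which case the gadget falls apart into $C_1\sqcup C_2$ and, by \cref{lem:chains}, is satisfiable iff both attachment edges are labeled at least $2k$, with equality $2k$ admissible but the gadget disconnected; when the attachment labels are odd the bridge is forced to carry an odd (hence nonzero) label, so the gadget is connected, and $e_1=e_2=2k+1$, $z=1$ should be realizable. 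Reading off the four bullets: labels $2k$ give satisfiability; labels $2k+1$ give satisfiability and connectivity; a label below $2k$ kills satisfiability; and a label $\le 2k$ --- being even, and so by the parity identity paired with another even label --- forces $z=0$ and hence disconnection.

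The main obstacle is choosing where the bridge attaches and how each chain terminates so that all of this holds simultaneously. A bare degree-$1$ terminal vertex, as in \cref{lem:chains}, forces the chain's final edge, and so the attachment edge, to be even --- which would kill the $2k+1$ case outright; yet to keep the lower bound tight enough for the disconnection claim (property 4) the bridge should attach where carrying label $z$ \emph{tightens} rather than relaxes the chain's propagation, morally at a vertex saturated near $2b$, and such a vertex cannot absorb an odd $z$. Threading this needle --- e.g.\ by routing the bridge through the terminal structure, or by inserting a dedicated link of exactly the right residual slack and parity next to the bridge --- together with re-verifying the degree bound at every vertex and handling the case $a\mid b$ by closing the chains with a $3$-regular depletor rather than a single vertex (as in \cref{lem:chains}), is where the real work lies; the underlying idea is the parity identity above.
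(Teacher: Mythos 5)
Your overall architecture (two forced-to-$2k$ substructures joined by a single internal edge whose label certifies connectivity) is reasonable, and your parity identity is correct. But there is a genuine gap exactly where you say ``the real work lies,'' and it is not a detail: the disconnection property in its critical case --- attachment labels exactly $2k$, where the gadget \emph{must} still be satisfiable but \emph{must not} be connected --- requires forcing the bridge label $z$ to be exactly $0$, and your parity identity only forces $z$ to be even. Nothing in the proposal rules out $z = 2$. To exclude that you need a quantitative squeeze of the form $z \le x - 2k$, and you correctly sense that this means attaching the bridge at a vertex saturated at $2b$; but you then announce a tension (``such a vertex cannot absorb an odd $z$'') and leave it unresolved. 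The tension is in fact resolvable --- when an attachment label increases from $2k$ to $2k+1$, the edge propagating the lower bound $2a - x$ into the saturated vertex can decrease by one, freeing exactly one unit of slack for $z = 1$ --- but your proposal neither states this nor verifies the resulting even-degree-sum and satisfiability conditions, so the two affirmative bullets and the fourth bullet are not established. (A further unaddressed point: once a bridge edge is attached to an internal vertex of a chain, the forcing analysis of \cref{lem:chains} for that chain no longer applies verbatim, so even the ``at least $2k$'' lower bound on the attachment edges needs to be re-derived.)

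For comparison, the paper's gadget makes this squeeze the whole proof and dispenses with the parity bookkeeping: it is a path on four intermediate vertices, with $2b - 2a + 2k$ worth of chain depletion glued to each of the two middle vertices. Writing $x$ for an attachment label, the outer vertex forces its inner edge to carry at least $2a - x$, and the adjacent depleted middle vertex (capped at $2b$) then forces the central edge to carry at most $x - 2k$; combined with the same bound from the other side and with nonnegativity, this single inequality yields all four bullets at once ($\ge 0$ achievable iff $x \ge 2k$; $\ge 1$ achievable iff $x \ge 2k+1$), and the affirmative cases are checked by exhibiting the path labeling $2a - x$, $x - 2k$, $2a - x$. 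If you want to salvage the two-chain-plus-bridge design, you would need to carry out essentially this same local computation at the bridge's endpoints.
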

\begin{proof}
Consider the gadget shown in \cref{fig:maxdeg-edge-gadget}: a path on $4$ intermediate vertices, where the $2$ middle vertices each have $2b - 2a + 2k$ worth of depletion glued to them --- this can be constructed using the chain depletors of \cref{lem:chains}.
    \begin{figure}[htb]
        \centering
        \includegraphics[width=0.4\linewidth]{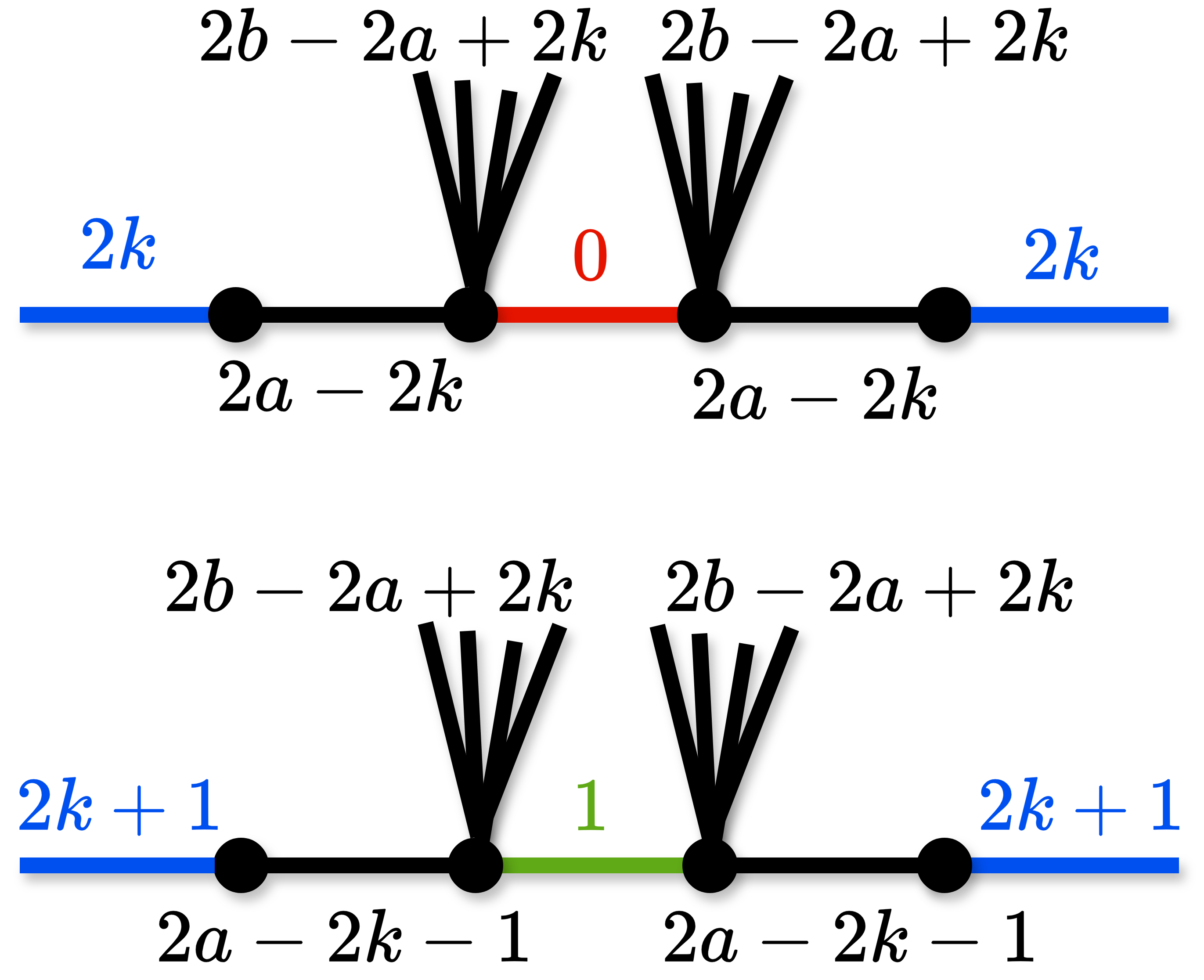}
        \caption{An edge gadget that is satisfiable if both inputs are $2k$, satisfiable and connected if both inputs are $2k+1$.
        } 
        \label{fig:maxdeg-edge-gadget}
    \end{figure}
     The middle edge can be assigned non-negative label only if both attachment edges are labeled at least $2k$, and can be assigned positive label only if both attachment edges are labeled at least $2k + 1$.
\end{proof}

We can think of this edge gadget as effectively providing $2k$ depletion to both sides. So, we can now give a reduction to show our complete complexity characterization:

\begin{theorem}\label{maxdeg}
    When restricted to graphs of maximum degree $d$, $[a,b]\hHAM$ is \NP-hard if either 
    $a = 1$ and $b < d$, or
    $a>1$ and $\frac{b}{a} < d-1$.
    Otherwise, $[a,b]\hHAM \in \P$.
\end{theorem}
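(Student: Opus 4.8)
The easiness half is already in hand: for $a>1$ and $b/a \ge d-1$ this is \cref{lem:a>1oof}, while for $a=1$ and $b \ge d$ every connected graph of maximum degree $d$ is $[1,b]\hHAM$ by \cref{lem:a=1lol}, so the problem lies in $\P$ trivially. For the $a=1$, $b<d$ hardness claim, it is enough to show $[1,b]\hHAM$ is $\NP$-hard in graphs of maximum degree $b+1$, since (as $b+1 \le d$) every such graph is also a graph of maximum degree $d$; and this is \cref{lem:a=1hard} with its ``$d$'' taken to be $b+1$ when $b\ge 2$, and the classical hardness of Hamiltonicity in cubic graphs \cite{GJ3} when $b=1$.

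It remains to treat $a>1$ with $b<a(d-1)$; once more, since a graph of maximum degree at most $d$ is a graph of maximum degree $d$, it suffices to give, for each such $(a,b,d)$, a polynomial-time reduction to $[a,b]\hHAM$ on graphs of maximum degree $d$. I would reduce from $[1,1]\hHAM$ in $3$-regular graphs (\NP-hard by \cite{GJ3}), mimicking the skeleton of \cref{lem:oddreghard} but replacing its rigid regular depletors by the fine-tunable chain depletors of \cref{lem:chains} and the two-ended edge gadget of \cref{lem:maxdeg-edge-gadget}. Given a $3$-regular graph $G$, build $G'$ by replacing each edge of $G$ with an edge gadget of \cref{lem:maxdeg-edge-gadget} with a carefully chosen even parameter $2k \in [2, 2a-2]$, and gluing depletors (chain depletors, single vertices, triangles) to the original vertices so as to keep all degrees $\le d$ while calibrating each original vertex's visit budget. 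Call an edge gadget \emph{live} when both of its attachment edges carry label $2k+1$ — in which case it transmits connectivity between its two endpoints — and \emph{dead} when they carry $2k$; the aim is that the live edges of $G$ picked out by an $[a,b]\hHAM$ labelling of $G'$ are forced, at every original vertex, to number exactly two, so that they form a $2$-factor of $G$, which by connectivity of the labelling must be a Hamiltonian cycle. The converse direction — Hamiltonian $G \Rightarrow G'$ is $[a,b]\hHAM$ — is routine: declare exactly the cycle's edges live, label the gadget interiors as \cref{lem:maxdeg-edge-gadget,lem:chains} prescribe, and use \cref{lem:maxlabel} to trim the remaining labels so that every original vertex's sum lands in $[2a,2b]$.

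The crux — and the step I expect to be the main obstacle — is this calibration together with the verification that no ``cheating'' labelling exists. I would pin the per-vertex budget down tightly enough that it cannot be satisfied with fewer than two live gadgets: a vertex with no live gadget is disconnected from the rest of $G'$, and (more subtly) a vertex with just one live gadget would let the live edges form a spanning tree rather than a $2$-factor, so this possibility must also be ruled out. The tools for this are a parity argument — the two attachment edges of a gadget, and hence the number of live gadgets at any original vertex, are constrained modulo $2$, already excluding ``odd'' configurations — combined with tightness of the budget, so that an off-by-two in the number of live gadgets at a vertex overshoots or undershoots $[2a,2b]$. Carrying this out uniformly over all $1 < a \le b < a(d-1)$ is the delicate part: the choice of $k$ and of the depletor amounts must be adapted to how close $b/a$ is to $d-1$, to whether $b$ is a multiple of $a$ (already handled separately in \cref{lem:chains}), and to the ``tight'' small-$b$ regime in which the gadget's built-in $2k$ worth of depletion nearly consumes the whole budget — in the closest cases one likely must also subdivide the edges of $G$ and spread depletion across the new interior vertices, exactly as in \cref{lem:oddreghard}, rather than loading it all onto the original vertices. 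Combining this with the easiness half and the $a=1$ hardness then yields the stated dichotomy.
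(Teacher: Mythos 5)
Your proposal takes essentially the same route as the paper: easiness via \cref{lem:a>1oof} and \cref{lem:a=1lol}, the $a=1$ case via \cref{lem:a=1hard}, and for $a>1$ a reduction from Hamiltonicity in cubic graphs in which each edge of $G$ becomes an edge gadget from \cref{lem:maxdeg-edge-gadget}, chain depletors from \cref{lem:chains} are glued to the original vertices, and the total depletion at each vertex is tuned to exactly $2b-2$ so that (by parity and budget tightness) exactly two incident gadgets can be ``live,'' forcing the live edges to form a connected $2$-factor. The calibration you flag as the main obstacle is dispatched in the paper by the single inequality $3(2a-1)+2a(d-3)\ge 2b-2$ together with fine-tunability of the depletors, with no subdivision of the edges of $G$ beyond what the edge gadgets themselves provide.
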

\begin{proof}
    The easiness results were established in \cref{sec:maxdegez}, and the case
    of $a = 1$ in \cref{lem:a=1hard}. So, it suffices to show a reduction from
    $[1,1]\hHAM$ in a $3$-regular graph $G$ to $[a,b]\hHAM$ in a max-degree $d$
    graph $G'$ when $1 < a \leq b < (d-1) a$. By replacing the edges of $G$ with
    the edge gadgets of \cref{lem:maxdeg-edge-gadget}, we can introduce up to
    $3(2a - 1)$ depletion at each vertex, and by gluing up to $(d-3)$ additional
    chain depletors to each original vertex, we can introduce up to $2a(d-3)$
    more. Since $$3(2a - 1) + 2a(d-3) \geq 2b - 2,$$ and all of these depletors are
    fully fine-tunable, we can ensure exactly $2b-2$ depletion on each original
    vertex of $G$. This means that up to but no more than $2$ of the edge
    gadgets incident to each original vertex can be connected; thus, $G'$ is
    $[a,b]\hHAM$ if and only if $G$ has a connected $2$-factor.
\end{proof}

\section{Complexity characterization in directed graphs}

In contrast to the case of undirected graphs, where we found that sufficiently large size of the $[a,b]$ range compared to degree made $[a,b]\hHAM$ easy, we will show that in directed graphs every case is hard, except when the $[a,b]$ range is very small (contains only a single element) and $d = 3$. 

Just as in the directed case, we can think of the problem of $[a,a]\hHAM$ as finding labels. More precisely, a directed graph is $[a,a]\hHAM$ if and only if there is a connected labeling of the edges such that every vertex has in-label  sum $a$ and out-label sum $a$. As in the directed case the equivalence of these notions follows from taking an Euler tour on the multigraph generated by viewing the labels as multiplicities for edges.

\subsection{Easiness}
First we give a simple algorithm for the easy cases of $[a,b]\hHAM$ in directed graphs.
\begin{lemma}\label{lem:3direz}
    For $a>1$, there exists a polynomial-time algorithm to decide $[a,a]\hHAM$ in directed graphs of maximum degree $3$. 
\end{lemma}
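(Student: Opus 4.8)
The plan is to reduce deciding $[a,a]\hHAM$ in a max-degree-$3$ digraph to solving a $2$-SAT-like constraint satisfaction problem, exploiting the fact that in such a digraph each vertex has total degree at most $3$, so its in-degree and out-degree are both small. For $a > 1$, by an analogue of \cref{lem:maxlabel} every edge can be assumed to carry label in $\{0,1,\dots,a\}$ --- but actually the key structural point is combinatorial: a vertex $v$ with, say, in-degree $1$ and out-degree $2$ forces its unique incoming edge to have label equal to the sum of the two outgoing labels, which must therefore equal $a$; and a vertex of in-degree $1$ and out-degree $1$ (a ``subdivision'' vertex) forces its two incident edges to have equal label. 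So the first step is to classify vertices by their $(\text{in-deg},\text{out-deg})$ type --- the possibilities are $(1,1),(1,2),(2,1),(2,2),(3,0),(0,3)$, etc. --- and observe that sinks/sources (types like $(3,0)$) make the instance trivially a NO instance, while degree-$\le 2$ vertices are handled as in the degree-$2$ remark.

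Next I would contract all the ``pass-through'' $(1,1)$ vertices so that the graph consists only of vertices where the in-degree or out-degree is $2$ and the other is $1$ or $2$. At a $(1,2)$ vertex the incoming label is forced to $a$ and the two outgoing labels must be a nonnegative pair summing to $a$, which (since each is in $\{0,\dots,a\}$ and we additionally need connectivity, forcing both nonzero unless... ) leaves a one-parameter family; the same for $(2,1)$. At a $(2,2)$ vertex we have in-labels $x_1+x_2 = a = y_1 + y_2$. The crucial observation is that once we also impose the WLOG bound and connectivity, the number of genuinely free binary-like choices per vertex is one, and all constraints couple only a constant number of edges --- so I would encode the whole thing as a constraint graph and argue it reduces to $2$-SAT or to a flow/matching problem solvable in polynomial time. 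Concretely, I expect to set up a flow network: think of the label on each edge as a flow value, require flow conservation shifted so that each vertex "emits" exactly $a$ in and $a$ out --- this is literally a circulation problem with lower bounds $0$ and upper bounds $a$ on edges, plus the extra constraint that the support is connected. A circulation with prescribed vertex throughput $a$ always exists combinatorially on any strongly connected graph; the real content is enforcing connectivity of the support.

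The main obstacle, and where I'd spend the most care, is the connectivity constraint: a circulation satisfying all the degree/throughput equations need not have connected support, and conversely we must show that whenever the equations are satisfiable at all, they are satisfiable with connected support --- or detect efficiently when they are not. Here I would use the bounded-degree hypothesis heavily: because every vertex has at most $3$ incident edges, the local choices are so constrained that I expect a matroid-intersection or direct case-analysis argument to show that one can always ``reroute'' a disconnected circulation into a connected one, or that the obstruction to doing so is itself efficiently detectable (e.g. the graph splits along a small edge cut that the throughput constraints force to carry zero flow). The case $a > 1$ should be exactly what makes this work --- as the footnote after \cref{lem:a>1oof} already flags, when $a = 1$ the ``$a-1 = 0$'' slack disappears and this rerouting freedom is lost, which is consistent with \cref{3digraph} asserting hardness for $a = b = 1$. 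So the proof structure is: (1) reduce to a bounded-degree circulation-with-connectivity problem; (2) solve the circulation equations in $\P$; (3) show that for $a>1$ and $d=3$ the connectivity side condition can always be met or efficiently refuted, using the scarcity of incident edges.
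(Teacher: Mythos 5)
Your setup is pointed in the right direction --- the problem is indeed a degree-constrained circulation (in-label sum $=$ out-label sum $= a$ at every vertex) plus a connectivity side condition, and you correctly identify connectivity as the crux and $a>1$ as the reason it becomes tractable. But the proposal stops exactly where the lemma's content begins: your step (3) is stated as ``I expect a matroid-intersection or direct case-analysis argument to show that one can always reroute a disconnected circulation into a connected one, or that the obstruction is efficiently detectable,'' with no such argument supplied. That claim \emph{is} the lemma; without it you have only solved a circulation feasibility problem, which by itself decides nothing about $[a,a]\hHAM$. There is also a concrete error in your case analysis: the paper defines degree in a digraph as in-degree plus out-degree, so a $(2,2)$ vertex has degree $4$ and cannot occur in a max-degree-$3$ digraph. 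Recognizing this is not a technicality --- it is the structural fact that makes the whole argument go through.

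The paper's proof fills the gap you left open as follows. Vertices with in- or out-degree $3$ force immediate rejection. Then one greedily propagates forced labels: whenever a vertex has a single unlabeled in-edge (or out-edge), that edge's label is uniquely determined. Because in- and out-degrees are each at most $2$ and cannot both equal $2$, once propagation stabilizes the unlabeled edges form vertex-disjoint cycles in which every vertex contributes either two in-edges or two out-edges; hence the cycles alternate in direction, have even length, and can be labeled alternately $\lfloor a/2\rfloor$ and $\lceil a/2\rceil$. For $a>1$ both values are strictly positive, so the only zero labels ever assigned are forced ones; consequently any violation of a visit constraint or of connectivity in the resulting labeling was forced and occurs in \emph{every} labeling. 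This is the ``rerouting freedom'' you hoped for, made precise: the free part of the solution space is exactly a product of even alternating cycles, each admitting an all-positive assignment. To complete your write-up you would need either to reproduce this structure or to actually prove your rerouting claim; as written, the proposal does not establish the lemma.
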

\begin{proof}
Fix a directed graph of maximum degree $3$.
If any vertex has in-degree or out-degree $3$ we immediately deduce that the
graph is not $[a,a]\hHAM$; in the remainder of the proof we assume there are
no vertices with in-degree or out-degree $3$.
    
Our algorithm for $[a,a]\hHAM$ begins as follows: while there exists a vertex $v$ in the graph with only a single unlabeled out-edge, or only a single unlabeled in-edge, assign that edge the unique label that satisfies $v$. When that process terminates, each vertex must have either zero or two unlabeled in-edges, and either zero or two unlabeled out-edges. So, if we look at the subgraph on unlabeled edges, and think of them as undirected, we must have a union of even-length cycles. For each of those cycles, assign labels by alternating between $\left\lfloor\frac{a}{2}\right\rfloor$ and $\left\lceil\frac{a}{2}\right\rceil$. Now, check whether the assigned labels represent an $[a,a]\hHAM$ labeling; if so accept, if not reject.
    
    Since $\left\lfloor\frac{a}{2}\right\rfloor + \left\lceil\frac{a}{2}\right\rceil = a$, if our labeling violates a vertex visit constraint, that constraint must have been violated during the forced initial phase. Similarly, since this process only labels edges with $0$ during the initial phase, if connectivity fails then no labeling can achieve connectivity.
\end{proof}

\subsection{Hardness}
\subsubsection{Max degree $3$}
We now show hardness of $[a,a+1]\hHAM$.
We will argue by induction. The following lemma constitutes the inductive step of the argument.
\begin{lemma}\label{lem:3regbasecase}
Fix $k\ge 1$. Suppose that $[1,k]\hHAM$ is \NP-hard in $3$-regular digraphs. Then, for any $a\ge 1$, $[a,a+k]\hHAM$ is \NP-hard in $3$-regular digraphs.
\end{lemma}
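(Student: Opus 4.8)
The idea is to reduce $[1,k]\hHAM$ in $3$-regular digraphs to $[a, a+k]\hHAM$ in $3$-regular digraphs by attaching a depletor gadget to each vertex that forces it to be visited an additional $a-1$ times (so that a budget of $[1,k]$ visits becomes a budget of $[a, a+k]$ visits). The challenge specific to digraphs is that a depletor must be degree-$3$-respecting: it attaches via a single directed edge going in and a single directed edge coming out (or perhaps attaches in some other balanced way), and it must both (i) force at least $a-1$ extra visits no matter what, and (ii) be satisfiable with exactly $a-1$ extra visits when the rest of the walk cooperates. The natural candidate is the directed analogue of the ``glue a single vertex'' trick or a small directed cycle: take a directed cycle $C$ on a few vertices, remove one edge $(u,w)$, and wire $u$'s missing out-edge and $w$'s missing in-edge to the host vertex $v$ of $G$. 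Then any walk that enters $C$ at all must traverse essentially all of $C$, and by choosing the length and edge multiplicities one forces the number of times $v$ is visited to increase by exactly $a-1$.

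**Key steps, in order.** First I would set up the labeling formulation for digraphs (already noted in the excerpt): a $[a,a+k]\hHAM$ labeling assigns each arc a nonnegative integer multiplicity so that in-sum $=$ out-sum at every vertex, every vertex's through-flow lies in $[a, a+k]$, and the support is connected. Second, I would construct the depletor: a small directed gadget $H_a$ with one ``entry port'' and one ``exit port'', such that (a) in any valid labeling where the ports carry positive flow, the flow through every internal vertex is forced, and in particular the total flow pushed back out through the exit port equals the flow in through the entry port (flow conservation), while the gadget itself ``consumes'' visits — more precisely, attaching $H_a$ to a host vertex $v$ forces $v$ to be visited at least $a-1$ more times than it would be without $H_a$; and (b) there is a labeling of $H_a$ consistent with port-flow exactly $2$ (one in, one out — i.e.\ the host walk steps in and steps back out once) that visits $v$ exactly $a-1$ extra times. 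I'd build this by taking a directed cycle of appropriate length with edge multiplicities chosen so internal vertices get through-flow exactly $a$, but this forces the path from entry to exit within the gadget to be long, contributing $a-1$ to $v$'s count. Third, given the $3$-regular digraph $G'$ that is an instance of $[1,k]\hHAM$, I would attach one copy of $H_a$ to each vertex of $G'$; since each vertex of $G'$ already has in-degree and out-degree a mix summing to $3$, I need to be careful that adding the two port-arcs keeps max degree $3$ — this likely requires first massaging $G'$ so each vertex has in-degree $1$ or out-degree $1$ (splitting vertices), or attaching the depletor in place of an existing arc. Fourth, I would verify both directions: a $[1,k]$ walk on $G'$ extends to an $[a,a+k]$ walk on the new graph $G''$ by routing each depletor with port-flow $2$ and internal through-flow $a$, adding $a-1$ to each host vertex (total in $[a, a+k]$); conversely, any $[a,a+k]$ walk on $G''$ must traverse every depletor (by connectivity), which eats $a-1$ visits at each host vertex, leaving a residual labeling on $G'$ that is a valid $[1,k]$ walk.

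**Main obstacle.** The delicate point is engineering the depletor gadget so that it is simultaneously (i) max-degree-$3$ and single-port, (ii) \emph{rigid} — forcing exactly $a-1$ extra visits with no slack that could let the walk ``cheat'' by using the gadget to absorb more flow (which would turn a non-$[1,k]$ graph into a spurious yes-instance), and (iii) flexible enough to be satisfiable. In the undirected setting ``gluing a single vertex gives $2a$ depletion'' (as used in \cref{lem:chains}) is the analogue, but directed degree-$3$ is tighter because in-degree and out-degree are tracked separately, so I expect I will need a gadget that is a directed cycle of length roughly $a$ with carefully chosen multiplicities, plus an argument that the $3$-regularity of $G'$ (and a preprocessing step ensuring every vertex has a degree-$1$ side) leaves exactly enough room to attach it. Getting the degree bookkeeping to work out to \emph{exactly} $3$, rather than $4$, while keeping the gadget rigid, is where the real work lies; if it cannot be done with a single port one would fall back to a two-port edge-replacement gadget analogous to \cref{lem:maxdeg-edge-gadget}, replacing one arc at each vertex, which sidesteps the degree increase entirely.
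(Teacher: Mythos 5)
Your high-level plan --- force each vertex to spend $a$ of its $[a,a+k]$ visit budget on a depletor so that the residual problem is $[1,k]\hHAM$ on the original graph --- is the same idea the paper uses, but two concrete gaps remain. First, the degree bookkeeping, which you correctly flag as ``where the real work lies,'' is the actual content of the lemma and is not resolved by your sketch: in a $3$-regular digraph every vertex already has degree exactly $3$, so a pendant two-port depletor cannot be attached anywhere, and a single vertex split (into a degree-$3$ piece and a degree-$2$ piece) still leaves no room for two port arcs. The paper's resolution is to replace each original vertex by \emph{three} hub vertices, one per incident arc (so each hub spends only one unit of degree on the outside), joined cyclically by \emph{acyclic} bridge gadgets; the bridges themselves are the depletors, and acyclicity is exactly what guarantees that the $\geq a$ visits each bridge requires must be routed through the hubs, charging $a$ to both the in-label and the out-label of every hub and leaving at most $k$ for the external arc. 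Your fallback of an edge-replacement gadget ``replacing one arc at each vertex'' would additionally need an argument that the depletion can be distributed so every vertex is charged exactly $a$, which is not automatic when a vertex meets several modified arcs.

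Second, your depletion target is off by one. If the gadget forces only $a-1$ extra visits at a host vertex whose budget is $a+k$, the external arcs may be traversed up to $(a+k)-(a-1)=k+1$ times, so the reduction as specified would establish equivalence with $[1,k+1]\hHAM$ rather than $[1,k]\hHAM$, and a graph that is $[1,k+1]\hHAM$ but not $[1,k]\hHAM$ would become a spurious yes-instance. The forced depletion must be $a$; then connectivity gives the lower bound of one external traversal and the cap gives the upper bound of $k$. Relatedly, your satisfiability claim of ``port-flow exactly $2$ \ldots{} with internal through-flow exactly $a$'' is internally inconsistent for a path-shaped gadget, since flow conservation at the internal vertices forces the port multiplicity to equal the internal multiplicity; such a gadget is only satisfiable with port flow at least $a$, which (happily) is the depletion you actually need.
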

\begin{proof}
Let $G$ be a $3$-regular digraph. We construct $G'$ by replacing each vertex of
$G$ with a \defn{vertex gadget} consisting of three \defn{hub vertices}
connected by \defn{bridge gadgets}, as shown in \cref{fig:3-di-hard-level1}.
    \begin{figure}[htb]
        \centering
        \includegraphics[width=0.6\linewidth]{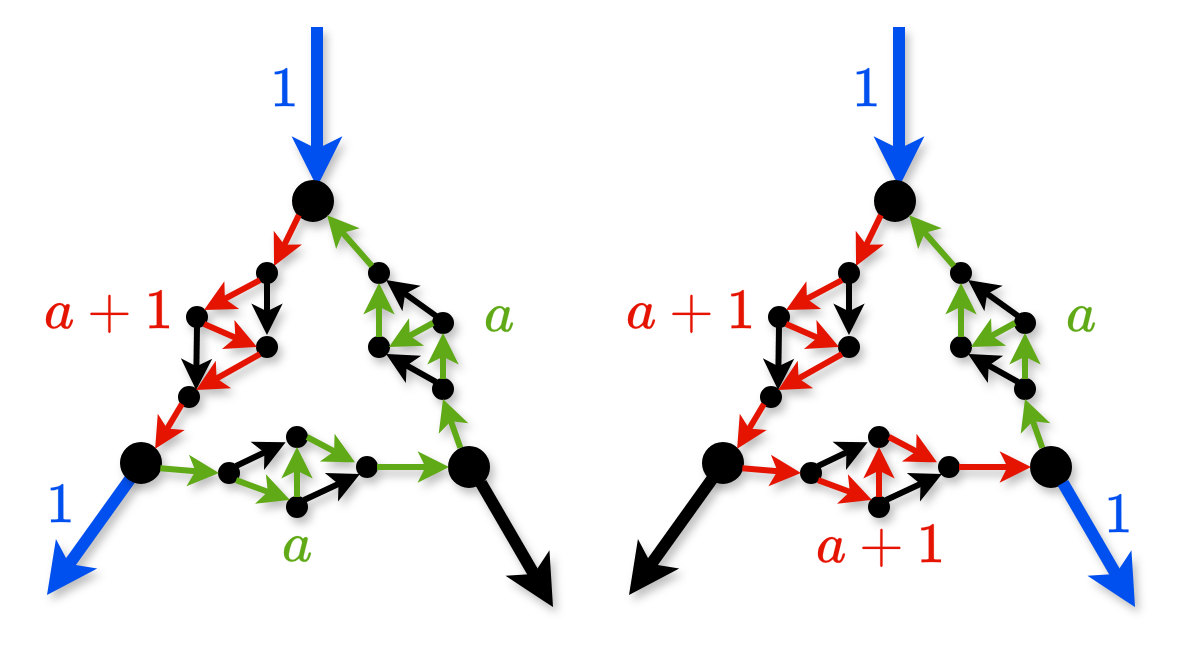}
        \caption{A vertex gadget. Note that the specific structure of the bridge gadgets is just to preserve $3$-regularity; what's important is that they are acyclic. The labelings shown here are the two possible $[a,a+1]\hHAM$ traversals --- in a similar manner, any way to enter and leave the gadget at most $k$ times can be turned into an $[a, a+k]\hHAM$ traversal.}
        \label{fig:3-di-hard-level1}
    \end{figure}
We claim that $G$ is $[1,k]\hHAM$ if and only if $G'$ is $[a, a+k]\hHAM$. 
First we show that if $G'$ is $[a,a+k]\hHAM$ then $G$ must be $[1,k]\hHAM$.
Each
of the bridge gadgets must be visited at least $a$ times, contributing at least
$a$ of the in-label and out-label of each of the hub vertices. Thus, the edges
entering and leaving the vertex gadget can be traversed at most $k+a-a$ times.
That is, each vertex gadget is visited at most $k$ times and at least once in an $[a,a+k]\hHAM$ tour of $G'$. If this is possible, then $G$ must be $[1,k]\hHAM$.

For the other direction, if $G$ is $[1,k]\hHAM$ we can create labelling of $G'$
proving that $G'$ is $[a,a+k]\hHAM$ as follows.
First, copy the labels on the edges in $G$ to the edges that enter and exit the vertex gadgets. 
Now, label the edges internal to the vertex gadgets as follows:
start by assigning all edges in the gadget label $a$. Then, increase the edges
in the first bridge gadget by one for every traversal of the first edge out of the vertex in $G$. 
Then, increase the labels of edges in both the first and second bridge gadgets by one for each traversal of the second edge leaving the vertex in $G$. This labeling visits each interval vertex at least $a$ times and at most $a+k$ times, as desired. Thus, $G'$ is $[a,a+k]\hHAM$.
\end{proof}

Now, by applying this construction recursively, we can show our full hardness result:

\begin{lemma}\label{lem:3dirhard}
    For any $1 \leq a < b$, $[a,b]\hHAM$ is $\NP$-hard in $3$-regular digraphs.
\end{lemma}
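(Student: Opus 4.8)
The plan is to prove \cref{lem:3dirhard} by invoking \cref{lem:3regbasecase} repeatedly, peeling the window $[a,b]$ down to the single base case $[1,1]\hHAM$ --- ordinary directed Hamiltonicity --- in $3$-regular digraphs. The point is that \cref{lem:3regbasecase} lets us trade ``width of the $[a,b]$ window'' against nothing: from the hardness of $[1,k]\hHAM$ in $3$-regular digraphs it deduces hardness of $[a,a+k]\hHAM$ in $3$-regular digraphs for every $a\ge 1$, and the construction there only triples each vertex and glues on acyclic bridge gadgets, so it preserves $3$-regularity and runs in polynomial time.

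Concretely, given $1\le a<b$, set $k=b-a\ge 1$. I would first establish, by induction on $k$, that $[1,k]\hHAM$ is \NP-hard in $3$-regular digraphs for every $k\ge 1$: for $k\ge 2$ write $[1,k]\hHAM = [1,1+(k-1)]\hHAM$ and apply \cref{lem:3regbasecase} with $a=1$ and parameter $k-1$, which turns the (inductively) hard problem $[1,k-1]\hHAM$ on $3$-regular digraphs into $[1,k]\hHAM$ on $3$-regular digraphs. Then one final application of \cref{lem:3regbasecase} with the original $a$ and $k=b-a$ reduces $[1,k]\hHAM$ to $[a,a+k]\hHAM=[a,b]\hHAM$ on $3$-regular digraphs, giving the claim. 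All of these reductions compose in polynomial time, so nothing else needs to be tracked.

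This leaves the base case $k=1$, which is where the real work lies: $[1,1]\hHAM$, i.e.\ the directed Hamiltonian cycle problem, is \NP-hard in $3$-regular digraphs --- meaning digraphs in which every vertex has in-degree plus out-degree equal to $3$. I would invoke the classical hardness of directed Hamiltonicity in bounded-degree digraphs (Plesn\'ik-style constructions); if a self-contained argument is wanted, the clean route is a reduction from Hamiltonicity in $3$-regular \emph{undirected} graphs, already available as \cite{GJ3}, replacing each vertex by a constant-size directed vertex gadget with three port-pairs and each undirected edge by a bidirectional edge gadget that is fully traversed whichever way the cycle crosses it. The subtlety I expect to be the main obstacle, and the reason one cannot simply split high-degree vertices into binary trees of arcs, is that in a Hamiltonian cycle \emph{every} vertex must be visited: any helper vertex that is bypassable when the cycle picks a particular branch destroys correctness. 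So the gadgets must be designed so that, regardless of which port-pair is used, all internal vertices lie on the cycle, while every internal vertex keeps total degree exactly $3$ (and a small amount of care, or a trivial padding gadget, may be needed to turn a max-degree-$3$ construction into an exactly-$3$-regular one, as required by \cref{lem:3regbasecase}).
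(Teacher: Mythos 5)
Your proposal is correct and takes essentially the same route as the paper: both start from the classical \NP-hardness of $[1,1]\hHAM$ in $3$-regular digraphs and iterate \cref{lem:3regbasecase}, inducting on the window width $k=b-a$ (re-specializing to $a=1$ at each step so the hypothesis of \cref{lem:3regbasecase} is available again), then apply the lemma once more with the target $a$. The only difference is that the paper simply cites the base case $[1,1]\hHAM$ as a known classical theorem, so your sketched gadget construction for it is extra detail rather than a genuine divergence.
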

\begin{proof}
We will argue by induction on $k$ that for all constants $a\ge 1, k\ge 1$, 
$[a,a+k]\hHAM$ is \NP-hard.
Using the classic theorem that $[1,1]\hHAM$ is \NP-hard in $3$-regular digraphs, 
by \cref{lem:3regbasecase} we deduce $[a,a+1]\hHAM$ is \NP-hard for all $a\ge
1$; this will serve as the base case of our induction.
Now, assume that $[a,a+k]\hHAM$ is \NP-hard for any constant $a$.
In particular this means that $[1,1+k]$ is \NP-hard. 
Then, applying \cref{lem:3regbasecase} we conclude that 
$[a,a+k+1]\hHAM$ is \NP-hard for all $a\ge 1$.
\end{proof}

\cref{lem:3direz} and \cref{lem:3dirhard} together establish \cref{3digraph}:

\begin{theorem}\label{3digraph}
    In directed graphs of maximum degree $3$, for $1 \leq a \leq b$, $[a,b]\hHAM \in \P$ if $a = b > 1$, and $\NP$-hard otherwise. 
\end{theorem}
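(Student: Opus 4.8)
The plan is to assemble the theorem directly from the two preceding subsections, case-splitting on the relationship between $a$ and $b$. There are exactly three regimes to handle: (i) $a = b > 1$, (ii) $a = b = 1$, and (iii) $a < b$. Together these exhaust all pairs with $1 \le a \le b$, and the first is the ``$\P$'' case while the latter two are the ``$\NP$-hard'' case, matching the statement.

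For regime (i), I would simply invoke \cref{lem:3direz}, which already gives a polynomial-time algorithm deciding $[a,a]\hHAM$ in digraphs of maximum degree $3$ for every $a > 1$; this is verbatim the claimed membership in $\P$. For regime (ii), $[1,1]\hHAM$ is the ordinary directed Hamiltonian cycle problem, and I would cite the classical theorem (the same one used as the base case in the proof of \cref{lem:3dirhard}) that it is $\NP$-hard even in $3$-regular digraphs. For regime (iii), I would invoke \cref{lem:3dirhard}, which establishes $\NP$-hardness of $[a,b]\hHAM$ in $3$-regular digraphs for all $1 \le a < b$.

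The only piece of bookkeeping is the passage between ``$3$-regular digraph'' and ``digraph of maximum degree $3$''. Every $3$-regular digraph has maximum degree $3$, so $\NP$-hardness in the regular class is automatically inherited by the larger max-degree-$3$ class; this covers regimes (ii) and (iii). Conversely, the easiness algorithm of \cref{lem:3direz} is already phrased for all digraphs of maximum degree $3$, so regime (i) needs no translation. Chaining the three cases then gives: $[a,b]\hHAM \in \P$ when $a = b > 1$ and $\NP$-hard otherwise, which is the theorem.

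I do not expect any genuine obstacle — the statement is essentially a corollary of the machinery already built in this section, and the proof is a short paragraph collecting \cref{lem:3direz}, \cref{lem:3dirhard}, and the classical directed-Hamiltonicity hardness result. If one also wanted $\NP$-completeness rather than just $\NP$-hardness in the hard regimes, I would additionally remark that by \cref{lem:maxlabel} a labeling with all labels at most $2a$ is a polynomial-size certificate whose validity (even degrees in $[2a,2b]$ and connectivity of the support) is checkable in polynomial time, but this is not required for the stated theorem.
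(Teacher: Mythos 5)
Your proposal is correct and matches the paper's own (one-line) proof, which likewise assembles the theorem from \cref{lem:3direz} for the $a=b>1$ case and \cref{lem:3dirhard} (whose base case is the classical hardness of $[1,1]\hHAM$ in $3$-regular digraphs) for the rest. Your explicit remark that hardness in $3$-regular digraphs transfers to the max-degree-$3$ class is the only bookkeeping the paper leaves implicit.
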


\subsubsection{Max degree 4}

The hardness for max degree $3$ digraphs immediately gives that $[a, b]\hHAM$ is
hard in max degree $4$ digraphs as long as $b > a$. In this section, we will
take a different reduction approach to show that $[a,a]\hHAM$ is also hard. We
will use the following depletor-like construction, which we call an
\defn{at-least-$k$ gadget}:

\begin{lemma}\label{lem:atleastoneedge}
     For any $k\in [1,a]$, there exists a gadget of max degree $4$ with one incoming edge and one outgoing edge that can be part of an $[a,a]\hHAM$ cycle as long as it's traversed at least $k$ times and at most $a$ times.
\end{lemma}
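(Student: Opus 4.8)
The plan is to build the at-least-$k$ gadget by a chain argument analogous to the chainlink construction of \cref{lem:chains}, but adapted to the directed, degree-$4$ setting where we measure depletion in terms of in-label and out-label sums rather than total visit count. The basic building block will be a short directed path on a few intermediate vertices, where some of the intermediate vertices carry attached depletors that force their in-label (equivalently out-label, since $a = a$) sum up toward $a$. Since we are allowed max degree $4$, each intermediate vertex has room for two extra half-edges beyond the incoming/outgoing path edges, so we can hang a depletor there. As in \cref{lem:chains}, feeding the ``excess'' label forward along the path lets each link impose the constraint that the outgoing label is at least (incoming label) $+ c$ for a constant $c$ controlled by how much depletion we glued on; chaining enough links together accumulates the constraint that the final edge has label at least $2k$ (in the undirected-visit-count units), i.e. that the gadget is traversed at least $k$ times. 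The last vertex of the chain then has its remaining budget filled so that it can only be satisfied if the chain carries at least $k$ traversals, and the whole gadget is trivially traversable any number of times between $k$ and $a$ by labeling the path edges uniformly and choosing depletor labelings appropriately.

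First I would fix the elementary depletors available in directed degree-$4$ graphs: gluing a single vertex with one in-edge and one out-edge to an intermediate vertex forces a fixed amount of in/out depletion, and one can also use small acyclic directed gadgets (a directed ``triangle'' replacement, or the bridge gadgets from \cref{lem:3regbasecase}) to get finer depletion values, exactly mirroring the ``single edge gives $2a$, triangle gives $2$'' dichotomy in the undirected case. Next I would analyze one link: set up the in-label $x$ into the first intermediate vertex, the label $m$ on the internal edge, and the out-label $y$, write the two constraints ($x + m \ge 2a$ at the first vertex from below on visits — here in directed terms $x \le a$, $m \le a$, and the visit count is $x$; I would actually track in-sum and out-sum separately but the bookkeeping is the same as \cref{lem:chains}), and conclude the link forces $y \ge x + c$ for the chosen $c$, and is satisfiable with equality when $x$ lies in a suitable range. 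Then I would chain $\ell$ copies so the accumulated constant equals the target, handle the greedy ordering so intermediate partial sums stay in the legal window (this is exactly the Bézout-plus-greedy argument already carried out in \cref{lem:chains}, and I would cite that lemma's construction rather than redo it), and cap the chain with a final vertex or small depletor that is unsatisfiable unless the chain carries $\ge k$ traversals. Finally I would check the ``at most $a$ times'' direction, which is immediate since every vertex and every attached depletor can be traversed up to $a$ times.

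The main obstacle I anticipate is the degree bound: in \cref{lem:chains} the chainlink intermediate vertices had ample degree room (up to $d$) to attach many depletors, but here $d = 4$ leaves only two spare half-edges per intermediate vertex, so I cannot glue an arbitrary multiset of depletors to a single vertex. The fix is to spread the depletion across many intermediate vertices of a long path — make each link contribute only a small, bounded amount of depletion, using more links to reach the target — and to build the needed ``medium/large'' depletion values themselves as little chains rather than as bundles at one vertex. So the real content is re-proving the \cref{lem:chains}-style fine-tuning under a hard degree-$4$ cap; I expect this to go through because the target depletion $2k \le 2a$ is a constant and the link constants can be taken to be $\pm 2$, so a chain of $O(a)$ links suffices, and each link needs at most a constant amount of attached structure. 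The directedness itself is not a serious obstacle: since $a = b$, the in-sum and out-sum constraints at each vertex are symmetric and can be enforced simultaneously by the same path-and-depletor device used in the undirected analysis.
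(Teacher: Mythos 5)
Your plan diverges from the paper's, and it has a real gap. The paper proves this lemma by a short induction on $k$: it exhibits an at-least-$1$ gadget as a base case, and then shows how to wrap an at-least-$(k-1)$ gadget into a slightly larger construction that forces one additional traversal. No chainlink or B\'ezout-style fine-tuning is used, and for good reason: the forcing mechanism of \cref{lem:chains} is driven entirely by the slack between the lower bound ($x+m\ge 2a$) at one vertex and the upper bound ($y+m+D\le 2b$) at the next, i.e.\ by $2b-2a>0$. In the directed $[a,a]$ setting that slack is zero --- every vertex must have in-label sum \emph{exactly} $a$ and out-label sum \emph{exactly} $a$. Consequently, a path vertex of in-degree $2$ and out-degree $2$ with one attached depletor edge on each side has its path-edge labels pinned to specific values by the depletor's contribution, rather than confined to an interval that can be shifted and propagated down a chain. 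There is no ``excess label'' to feed forward, so the link inequality $y\ge x+c$ you want to establish does not arise from the bookkeeping you describe; you would instead get equalities that fix the traversal count to a single value, which is the opposite of a gadget accepting the whole range $[k,a]$.

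A second, related problem is your assumed menu of elementary directed depletors. In the undirected case a pendant triangle gives depletion $2$ while a pendant edge gives $2a$, and this gap is what makes the B\'ezout fine-tuning possible. In a directed graph under the exact $[a,a]$ constraint, any structure hanging off a vertex $v$ by a single in-edge/out-edge pair whose internal vertices have no branching (e.g.\ a directed $2$-cycle $v\to w\to v$ or a directed triangle $v\to w_1\to w_2\to v$) has \emph{all} of its edge labels forced to exactly $a$, so it depletes $v$ by exactly $a$ on each side; there is no small-depletion analogue of the triangle, and hence no basis of depletion values to combine. Flexibility in this setting can only come from vertices with two parallel in-edges (or out-edges) among which the total of $a$ is split in a variable way --- which is precisely a branching construction, and is what the paper's recursive gadget exploits. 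To repair your argument you would need to redesign the link around such a splitting vertex, at which point you have essentially rediscovered the paper's inductive construction rather than an adaptation of \cref{lem:chains}.
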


\begin{proof}
     First, as a base case, consider the construction in \cref{fig:atleastoneedge}. It is clear that this is an at-least-$1$ gadget.

     \begin{figure}[htb]
         \centering
         \includegraphics[width=0.25\linewidth]{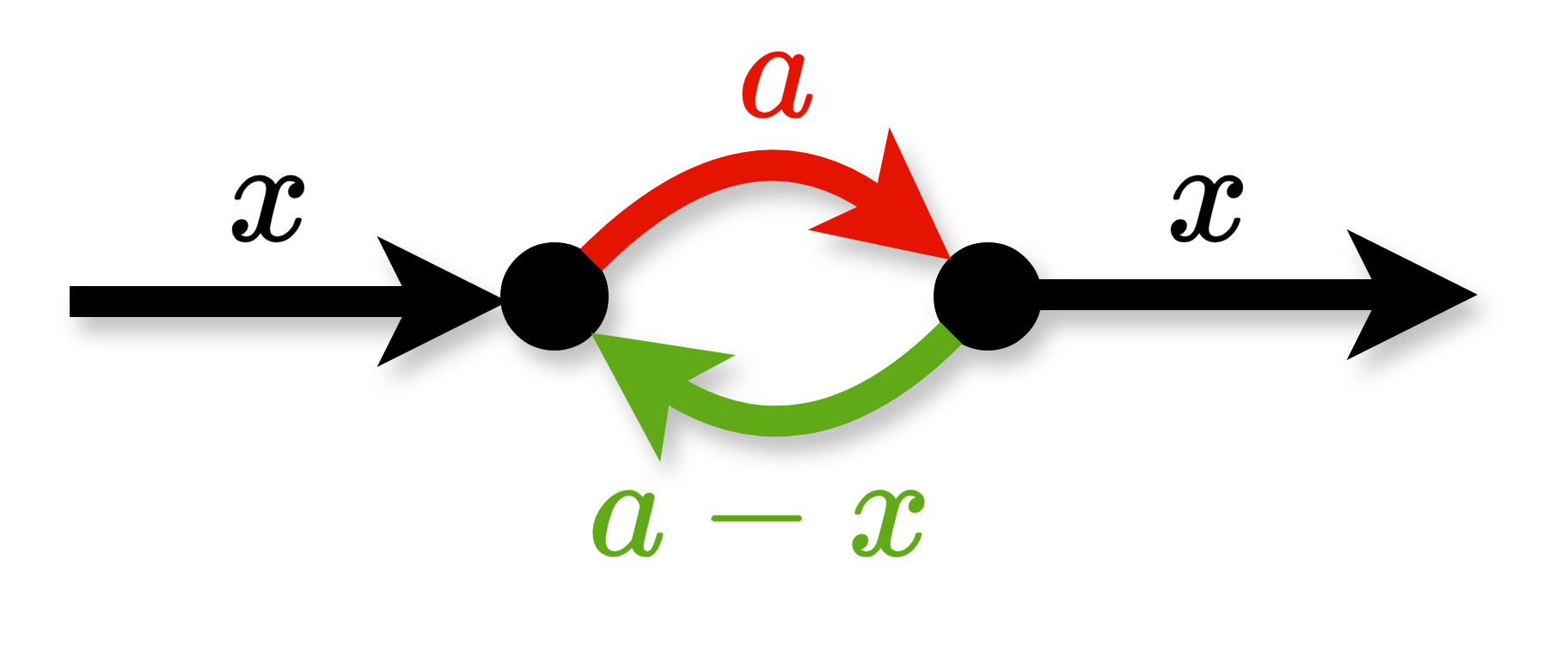}
         \caption{A gadget that can be traversed between $1$ and $a$ times.}
         \label{fig:atleastoneedge}
     \end{figure}
     
    Now, suppose we know how to build an at-least-$(k-1)$ gadget. Then, the construction in \cref{fig:atleastk} is an at-least-$k$ gadget.

    \begin{figure}[htb]
         \centering
         \includegraphics[width=0.45\linewidth]{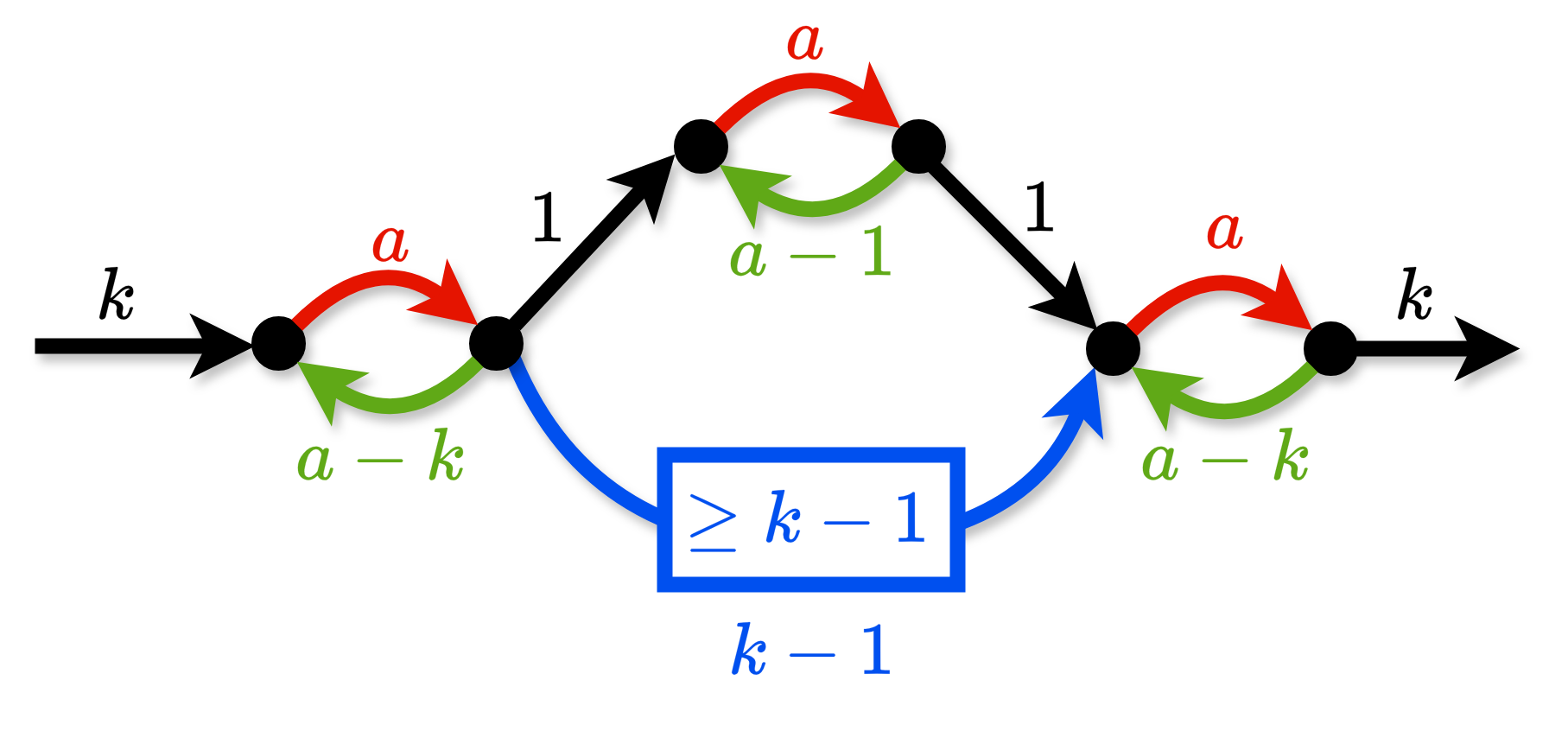}
         \caption{A gadget that can be traversed between $k$ and $a$ times.}
         \label{fig:atleastk}
     \end{figure}
     
\end{proof}

By using at-least-$(a-1)$ gadgets, we can give our desired reduction. 

\begin{lemma}\label{lem:4hard}
    For any $a \geq 1$, $[a,a]\hHAM$ is $\NP$-hard in directed graphs of maximum degree $4$.
\end{lemma}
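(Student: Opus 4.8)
The plan is to reduce from $[1,1]\hHAM$ in $3$-regular digraphs, which is \NP-hard by the classical result invoked in the proof of \cref{lem:3dirhard} (the case $a = 1$ of the present lemma is exactly this classical statement, so assume $a \ge 2$). Given a $3$-regular digraph $G$, every vertex $v$ has in-degree and out-degree summing to $3$, so it is either of type $(1,2)$ or of type $(2,1)$. We build a max-degree-$4$ digraph $G'$ by leaving every edge of $G$ intact and replacing each vertex $v$ by a \defn{vertex gadget} $\Gamma_v$ with one port for each edge of $G$ incident to $v$. The gadget $\Gamma_v$ is assembled from the at-least-$(a-1)$ gadgets of \cref{lem:atleastoneedge}, arranged so that in any valid $[a,a]$ labeling the $a-1$ units of ``slack'' in the visit budget at the relevant internal vertices are soaked up by these gadgets. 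The designed effect is that a valid $[a,a]$ labeling of $G'$ must assign label $a$ to exactly one incoming edge and exactly one outgoing edge of $G$ at each original vertex, and label $0$ to all of that vertex's other incident edges.

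Granting such a gadget, the reduction goes through quickly. Call an edge of $G$ \emph{selected} if it receives label $a$. By the forcing property, the selected edges form a spanning subdigraph of $G$ in which every vertex has in-degree and out-degree exactly $1$ --- a disjoint union of directed cycles. Since every vertex of $G'$ is visited, the interior of each $\Gamma_v$ lies in the nonzero-label subgraph, and it connects to the rest of $G'$ only through $v$'s incident edges, whose nonzero members are exactly $v$'s two selected edges; hence the interior of $\Gamma_v$ is in the same nonzero-label component as the selected cycle through $v$, and connectivity of the $[a,a]$ labeling is equivalent to the union of selected cycles being a single cycle, i.e.\ a Hamiltonian cycle of $G$. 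Conversely, given a Hamiltonian cycle $C$ of $G$, label the edges of $C$ with $a$, the remaining edges of $G$ with $0$, and complete each $\Gamma_v$ using \cref{lem:atleastoneedge}: route the $a$ units entering $v$ on its selected in-edge through all of $\Gamma_v$'s interior and out its selected out-edge, running each internal at-least-$(a-1)$ gadget at exactly $a-1$ traversals so every internal vertex is visited exactly $a$ times. This yields a connected $[a,a]$ labeling of $G'$, so $G'$ is $[a,a]\hHAM$ iff $G$ is Hamiltonian.

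The main obstacle is building $\Gamma_v$ with the advertised ``exactly one selected port in, exactly one selected port out'' behavior while staying within max degree $4$: each original edge of $G$ already occupies a unit of degree at the boundary vertices of the gadget, so the depletion apparatus must live on freshly introduced internal vertices, which is precisely where having degree budget $4$ rather than $3$ is essential. Two subsidiary points also need care. First, one must rule out the ``$\{a-1,a-1\}$''-type balancing at a pair of ports feeding a degree-$2$ side of $v$, which is exactly why the threshold in \cref{lem:atleastoneedge} must be $k = a-1$ and why the small case $a = 2$ (where $a-1 = 1$, so one uses the base-case gadget of \cref{lem:atleastoneedge} directly) should be checked separately. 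Second, one must verify that a gadget interior can always absorb its residual visit count exactly $a$ everywhere --- this is the role of the fine-grained ``between $k$ and $a$'' traversal range of the at-least-$k$ gadgets, which lets us tune each internal visit count up to $a$ on the nose.
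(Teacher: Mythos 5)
There is a genuine gap: you never actually construct the vertex gadget $\Gamma_v$, and that construction is the entire content of the proof. You correctly identify the right ingredients (reduce from directed Hamiltonicity in degree-$3$ digraphs, use the at-least-$(a-1)$ gadgets of \cref{lem:atleastoneedge} to eat up all but one unit of the visit budget at each port), and your downstream argument --- selected edges form a union of directed cycles, connectivity forces a single Hamiltonian cycle --- is the same as the paper's and is fine once the gadget exists. But you then write ``granting such a gadget'' and later concede that ``the main obstacle is building $\Gamma_v$ with the advertised behavior,'' listing considerations without resolving them. A reduction whose central gadget is left as an acknowledged obstacle is not a proof. The paper's resolution is concrete and simple: split each vertex $v$ into $v_{\text{in}}$ and $v_{\text{out}}$, route all of $v$'s original in-edges into $v_{\text{in}}$ and all original out-edges out of $v_{\text{out}}$, add the edge $(v_{\text{in}}, v_{\text{out}})$, and attach one at-least-$(a-1)$ gadget from $v_{\text{out}}$ back to $v_{\text{in}}$. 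Then the gadget consumes at least $a-1$ of $v_{\text{out}}$'s out-label and at least $a-1$ of $v_{\text{in}}$'s in-label, so the original edges leaving $v_{\text{out}}$ carry total label at most $1$, and likewise for those entering $v_{\text{in}}$; a degree count shows the result has maximum degree $4$.

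A second sign that the gadget has not really been designed is your claim that the forced labeling ``must assign label $a$ to exactly one incoming edge and exactly one outgoing edge'' of each original vertex. In any construction of this flavor, the depletion gadget absorbs at least $a-1$ of the $a$ units of in-label (resp.\ out-label) at the port, so the surviving original edges carry total label at most $1$, not $a$: the selected edges of $G$ end up with label exactly $1$. Getting label $a$ onto a single original edge while still forcing uniqueness of the selected in- and out-edge would require the port vertex to keep its full budget of $a$ for original edges, which is incompatible with using visit-count depletion as the forcing mechanism. This inconsistency does not doom the overall strategy --- label $1$ on selected edges is exactly what one wants --- but it shows the accounting at the ports has not been worked out, which is precisely the step the proof cannot omit.
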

\newcommand{\vin}{\text{in}}
\newcommand{\vout}{\text{out}}

\begin{proof}
    We reduce from $[1,1]\hHAM$ in max degree $3$ digraphs. Let $G$ be a max degree $3$ digraph. We will make two copies, $V_{\vin}$ and $V_{\vout}$, of the vertex set, and add the edge $(u_{\vout}, v_{\vin})$ for every edge $(u, v) \in E(G)$. We will also add the edges $(v_{\vin}, v_{\vout})$ for all $v \in V(G)$. Finally, for every $v \in V(G)$, we will add an at-least-$(a-1)$ gadget from $v_{\vout}$ to $v_{\vin}$. This construction is shown in \cref{fig:4hard}.

    \begin{figure}[H]
        \centering
        \includegraphics[width = 0.45\linewidth]{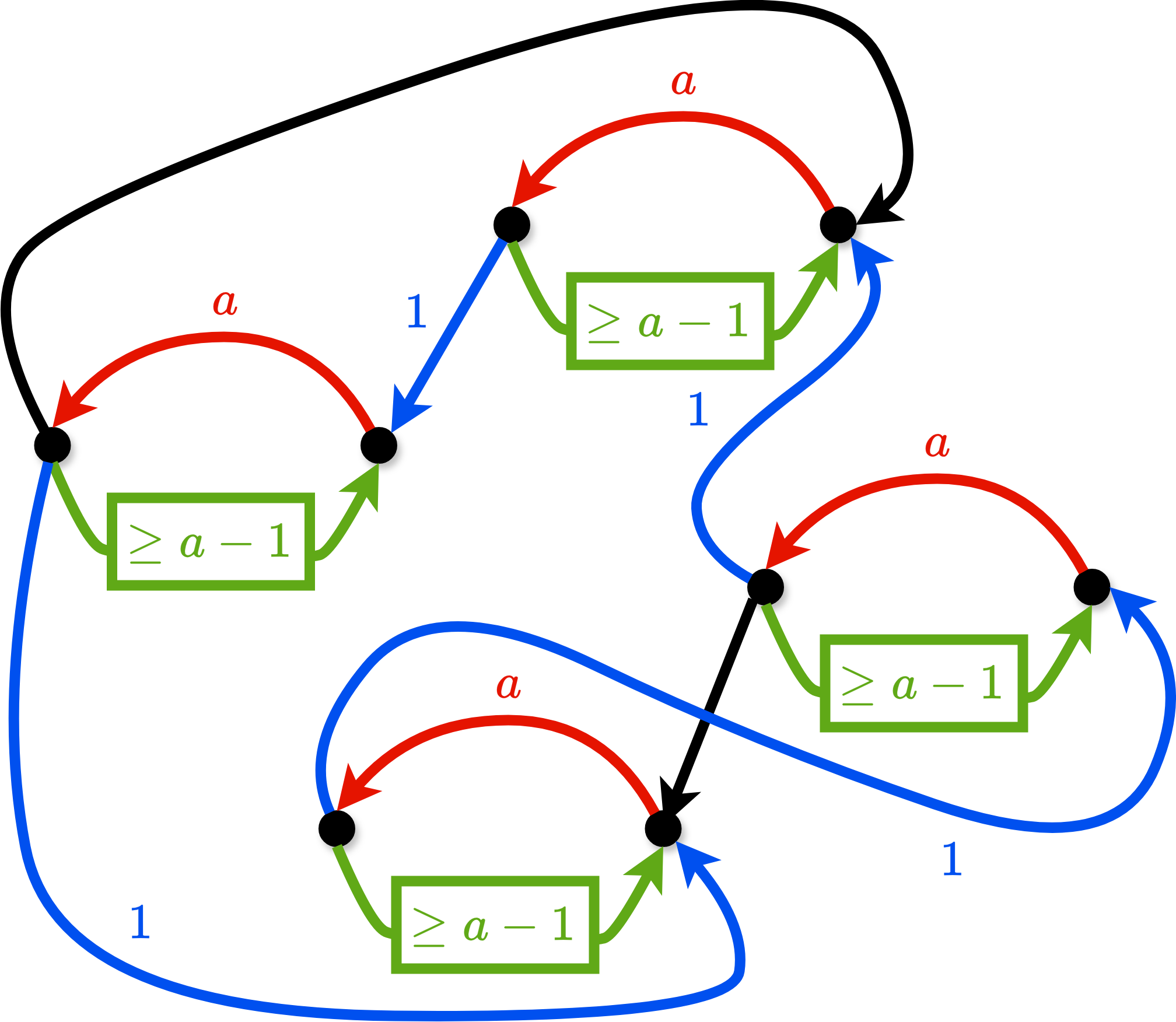}
        \caption{Reducing $[1,1]\hHAM$ in max degree $3$ digraphs to $[a,a]\hHAM$ in max degree $4$ digraphs.}
        \label{fig:4hard}
    \end{figure}

    The at-least-$(a-1)$ gadget consumes all but $1$ of $v_{\vout}$'s available out-label, and all but $1$ of $v_{\vin}$'s available in-label, so at most one other edge out of $v_{\vout}$ can be given nonzero label, and at most one other edge into $v_{\vin}$ can be given nonzero label. Thus, in order to maintain connectivity between all $v$, the labeling in $G'$ must correspond to a Hamiltonian cycle in $G$.
\end{proof}

This establishes \cref{4digraph}:

\begin{theorem}\label{4digraph}
    For all $b \geq a \geq 1$, $[a,b]\hHAM$ is $\NP$-hard in directed graphs of maximum degree 4. 
\end{theorem}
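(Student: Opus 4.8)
The statement is an immediate consequence of the two lemmas proved just above it, so the plan is simply to assemble them via a case split on whether the interval $[a,b]$ is a single point.

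First I would handle the case $a < b$. By \cref{lem:3dirhard}, $[a,b]\hHAM$ is $\NP$-hard already in $3$-regular digraphs. Since a $3$-regular digraph (every vertex has in-degree plus out-degree equal to $3$) is in particular a digraph of maximum degree $3 \le 4$, hardness in $3$-regular digraphs implies hardness in the larger class of maximum-degree-$4$ digraphs. (Formally: the identity map is a valid reduction, as every $3$-regular instance is already a max-degree-$4$ instance.) This disposes of all $b > a$.

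Next I would handle the remaining case $a = b$. Here \cref{lem:4hard} directly gives that $[a,a]\hHAM$ is $\NP$-hard in directed graphs of maximum degree $4$, for every $a \ge 1$. Combining the two cases covers every pair $b \ge a \ge 1$, which is exactly the claim.

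There is no real obstacle here: the work is entirely front-loaded into \cref{lem:3dirhard} (the recursive vertex-gadget construction of \cref{lem:3regbasecase}) and \cref{lem:4hard} (the at-least-$(a-1)$ gadget of \cref{lem:atleastoneedge} together with the $V_{\vin}/V_{\vout}$ reduction). The only thing to be careful about in writing the final theorem is to state explicitly that $3$-regular (equivalently, max-degree-$3$) digraphs are a subclass of max-degree-$4$ digraphs, so that the $a<b$ hardness transfers without any new construction.
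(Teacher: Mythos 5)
Your proposal is correct and matches the paper's own argument exactly: the paper notes that the max-degree-$3$ hardness (\cref{lem:3dirhard}) transfers immediately to max-degree-$4$ digraphs when $b > a$, and invokes \cref{lem:4hard} for the remaining case $a = b$. Nothing further is needed.
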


\section{Conclusion and open questions}

We now have full characterizations of which $a$ and $b$ make $[a,b]\hHAM$
\NP-hard in $d$-regular graphs and graphs of max degree $d$. One question of
further potential interest would be to extend this analysis to low-dimensional
grid graphs (i.e. induced subgraphs of a square lattice). It's known that
$[1,1]\hHAM$ is \NP-hard in 2-dimensional grid graphs \cite{gridgraphs}, but not
clear whether one should expect $[1,2]\hHAM$ to be hard as well, for example.
Another problem worth considering would be hardness of approximation --- for
instance, when is it possible to distinguish between $[a,a]\hHAM$ graphs and
graphs that aren't even $[a',a']\hHAM$, for some $a' > a$?

\section*{Acknowledgments}

This paper was initiated during open problem solving in the MIT class
``Algorithmic Lower Bounds: Fun with Hardness Proofs'' (6.5440)
taught by Erik Demaine in Fall 2023.
We thank the other participants of that class for helpful discussions
and providing an inspiring atmosphere.

\printbibliography

\end{document}